\theoremstyle{plain}
\newtheorem{theorem}{Theorem}
\newtheorem{lemma}{Lemma}
\theoremstyle{definition}
\theoremstyle{remark}
\date{}
\begin{document}

\title{Exact learning and test theory}
\author{Mikhail Moshkov\thanks{Computer, Electrical and Mathematical Sciences and Engineering Division,
King Abdullah University of Science and Technology (KAUST),
Thuwal 23955-6900, Saudi Arabia. Email: mikhail.moshkov@kaust.edu.sa.
}}
\maketitle

\begin{abstract}
In this paper, based on results of exact learning and test theory, we study
arbitrary infinite binary information systems each of which consists of an
infinite set of elements and an infinite set of two-valued functions
(attributes) defined on the set of elements. We consider the notion of a
problem over information system, which is described by a finite number of
attributes: for a given element, we should recognize values of these
attributes. As algorithms for problem solving, we consider decision trees of
two types: (i) using only proper hypotheses (an analog of proper equivalence
queries from exact learning), and (ii) using both attributes and proper
hypotheses. As time complexity, we study the depth of decision trees. In the
worst case, with the growth of the number of attributes in the problem
description, the minimum depth of decision trees of both types either is
bounded from above by a constant or grows as a logarithm, or linearly. Based
on these results and results obtained earlier for attributes and arbitrary
hypotheses, we divide the set of all infinite binary information systems
into seven complexity classes.
\end{abstract}

{\it Keywords}: test theory, exact learning,  decision trees, complexity classes.

\section{Introduction \label{S1a}}

Exact learning initiated by Angluin \cite{Angluin88} and test theory
initiated by Chegis and Yablonskii \cite{Chegis58} both study decision
trees. These theories are closely related. In particular, attributes from
test theory correspond to membership queries from exact learning. Exact
learning considers additionally so-called equivalence queries.

Heged\"{u}s in \cite{Hegedus95} generalized some bounds from \cite{Moshkov83}
obtained in the framework of test theory to the case of exact learning with
membership and equivalence queries. Similar results were obtained
independently and in the other way by Hellerstein et al. \cite{Hellerstein96}%
. In this paper, we move in the opposite direction: we add to the model
considered in test theory the notion of a hypothesis that allows us to use
an analog of equivalence queries. In \cite{DAM}, we studied arbitrary
hypotheses. This paper is devoted to the consideration of proper hypotheses
(an analog of proper equivalence queries).

We study infinite binary information systems each of which consists of an
infinite set of elements $A$ and an infinite set $F$ of functions
(attributes) from $A$ to $\{0,1\}$. We define the notion of a problem
described by a finite number of attributes $f_{1},\ldots ,f_{n}$ from $F$:
for a given element $a\in A$, we should recognize the tuple $%
(f_{1}(a),\ldots ,f_{n}(a))$. To this end, we can use decision trees based
on two types of queries. We can ask about the value of an attribute $%
f_{i}\in \{f_{1},\ldots ,f_{n}\}$. We will obtain an answer of the kind $%
f_{i}(x)=\delta $, where $\delta \in \{0,1\}$. We can also ask if a
hypothesis $f_{1}(x)=\delta _{1},\ldots ,f_{n}(x)=\delta _{n}$ is true,
where $\delta _{1},\ldots ,\delta _{n}\in \{0,1\}$. Either this hypothesis
will be confirmed or we will obtain a counterexample in the form $%
f_{i}(x)=\lnot \delta _{i}$. The considered hypothesis is called proper if
there exists an element $a\in A$ such that $f_{1}(a)=\delta _{1},\ldots
,f_{n}(a)=\delta _{n}$. As time complexity of a decision tree, we consider
its depth, which is equal to the maximum number of queries in a path from
the root to a terminal node of the tree.

Based on the results of exact learning \cite%
{Angluin88,Angluin04,Littlestone88,Maass92}, and test theory and rough set theory \cite%
{Moshkov83,Moshkov89,Moshkov05}, for an arbitrary infinite binary
information system, we studied in \cite{DAM} three functions of Shannon
type, which characterize the dependence in the worst case of the minimum
depth of a decision tree solving a problem on the number of attributes in
the problem description. The considered three functions correspond to the
following three cases:

\begin{itemize}
\item Only attributes are used in decision trees.

\item Only hypotheses are used in decision trees.

\item Both attributes and hypotheses are used in decision trees.
\end{itemize}

We proved that the first function has two possible types of behavior:
logarithmic and linear. The second and the third functions have three
possible types of behavior: constant, logarithmic, and linear. The first
function was studied in \cite{Moshkov89,Moshkov05}. Results related to the
second and the third functions were presented in \cite{Moshkov01} without
proofs. We provided these proofs in \cite{DAM}. In the same paper, we also
studied joint behavior of these three functions and described four complexity
classes of infinite binary information systems.

In this paper, we study two functions of Shannon type, which also
characterize the dependence in the worst case of the minimum depth of a
decision tree solving a problem on the number of attributes in the problem
description. These functions correspond to the following two cases:

\begin{itemize}
\item Only proper hypotheses are used in decision trees.

\item Both attributes and proper hypotheses are used in decision trees.
\end{itemize}

We prove that both functions have three possible types of behavior:
constant, logarithmic, and linear. Results related to these functions were
presented in \cite{Moshkov01} without proofs. We also study joint behavior of
all five functions and describe seven complexity classes of infinite binary
information systems.

The rest of the paper is organized as follows. In Sections \ref{S2a} and \ref%
{S3a}, we consider basic notions and previous results obtained in \cite{DAM}%
. In Section \ref{S4a}, we present main results of this paper. Sections \ref%
{S5a} and \ref{S6a} contains proofs of main results, and Section \ref{S7a}
-- short conclusions.

\section{Basic Notions \label{S2a}}

Let $A$ be a set and $F$ be a set of functions from $A$ to $\{0,1\}$.
Functions from $F$ are called \emph{attributes} and the pair $U=(A,F)$ is
called a\emph{\ binary information system} (this notion is close to the
notion of information system proposed by Pawlak \cite{Pawlak81}). If $A$ and
$F$ are infinite sets, then the pair $U=(A,F)$ is called an \emph{infinite
binary information system}.

The set $A$ may be interpreted as the set of inputs for problems over the
information system $U$. A \emph{problem over} $U$ is an arbitrary $n$-tuple $%
z=(f_{1},\ldots ,f_{n})$ where $n\in \mathbb{N} $, $\mathbb{N} $ is the set
of natural numbers $\{1,2,\ldots \}$, and $f_{1},\ldots ,f_{n}\in F$. The
problem $z$ may be interpreted as a problem of searching for the tuple $%
z(a)=(f_{1}(a),\ldots ,f_{n}(a))$ for an arbitrary $a\in A$. The number $%
\dim z=n$ is called the \emph{dimension} of the problem $z$. Denote $%
F(z)=\{f_{1},\ldots ,f_{n}\}$. We denote by $P(U)$ the set of problems over $%
U$.

A \emph{system of equations over }$U$ is an arbitrary equation system of the
kind
\[
\{g_{1}(x)=\delta _{1},\ldots ,g_{m}(x)=\delta _{m}\}
\]%
where $m\in \mathbb{N}\cup \{0\}$, $g_{1},\ldots ,g_{m}\in F$, and $\delta
_{1},\ldots ,\delta _{m}\in \{0,1\}$ (if $m=0$, then the considered equation
system is empty). This equation system is called a \emph{system of equations
over }$z$ if $g_{1},\ldots ,g_{m}\in F(z)$. The considered equation system
is called \emph{consistent} (on $A$) if its set of solutions on $A$ is
nonempty. The set of solutions of the empty equation system coincides with $%
A $.

As algorithms for problem $z$ solving, we consider decision trees with two
types of \emph{queries}. We can choose an attribute $f_{i}\in F(z)$ and ask
about its value. This query has two possible answers $\{f_{i}(x)=0\}$ and $%
\{f_{i}(x)=1\}$. We can formulate a \emph{hypothesis over} $z$ in the form $%
H=\{f_{1}(x)=\delta _{1},\ldots ,f_{n}(x)=\delta _{n}\}$ where $\delta
_{1},\ldots ,\delta _{n}\in \{0,1\}$, and ask about this hypothesis. This
query has $n+1$ possible answers: $H,\{f_{1}(x)=\lnot \delta
_{1}\},...,\{f_{n}(x)=\lnot \delta _{n}\}$ where $\lnot 1=0$ and $\lnot 0=1$%
. The first answer means that the hypothesis is true. Other answers are
counterexamples. The hypothesis $H$ is called \emph{proper }(for $U$) if the
system of equations $H$ is consistent on $A$.

A \emph{decision tree over} $z$ is a marked finite directed tree with the
root in which

\begin{itemize}
\item Each \emph{terminal} node is labeled with an $n$-tuple from the set $%
\{0,1\}^{n}$.

\item Each node, which is not terminal (such nodes are called \emph{working}%
), is labeled with an attribute from the set $F(z)$ or with a hypothesis
over $z$.

\item If a working node is labeled with an attribute $f_{i}$ from $F(z)$,
then there are two edges, which leave this node and are labeled with the
systems of equations $\{f_{i}(x)=0\}$ and $\{f_{i}(x)=1\}$, respectively.

\item If a working node is labeled with a hypothesis
\[
H=\{f_{1}(x)=\delta _{1},\ldots ,f_{n}(x)=\delta _{n}\}
\]
over $z$, then there are $n+1$ edges, which leave this node and are labeled
with the systems of equations $H, \{f_{1}(x)=\lnot \delta _{1}\}, ... ,
\{f_{n}(x)=\lnot \delta _{n}\}$, respectively.
\end{itemize}

Let $\Gamma $ be a decision tree over $z$. A \emph{complete path} in $\Gamma
$ is an arbitrary directed path from the root to a terminal node in $\Gamma $%
. We now define an equation system $\mathcal{S}(\xi )$ over $U$ associated
with the complete path $\xi $. If there are no working nodes in $\xi $, then
$\mathcal{S}(\xi )\ $is the empty system. Otherwise, $\mathcal{S}(\xi )$ is
the union of equation systems assigned to the edges of the path $\xi $. We
denote by $\mathcal{A}(\xi )=\mathcal{A}_A(\xi )$ the set of solutions on $A$ of the system of
equations $\mathcal{S}(\xi )$ (if this system is empty, then its solution
set is equal to $A$).

We will say that a decision tree $\Gamma $ over $z$ \emph{solves\ the
problem }$z$\emph{\ relative to} $U$ if, for each element $a\in A$ and for
each complete path $\xi $ in $\Gamma $ such that $a\in \mathcal{A}(\xi )$,
the terminal node of the path $\xi $ is labeled with the tuple $z(a)$.

We now consider an equivalent definition of a decision tree solving a
problem. Denote by $\Delta _{U}(z)$ the set of tuples $(\delta _{1},\ldots
,\delta _{n})\in \{0,1\}^{n}$ such that the system of equations $%
\{f_{1}(x)=\delta _{1},\ldots ,f_{n}(x)=\delta _{n}\}$ is consistent. The
set $\Delta _{U}(z)$ is the set of all possible solutions to the problem $z$%
. Let $\Delta \subseteq \Delta _{U}(z)$, $f_{i_{1}},\ldots ,f_{i_{m}}\in
\{f_{1},\ldots ,f_{n}\}$, and $\sigma _{1},\ldots ,\sigma _{m}\in \{0,1\}$.
Denote
\[
\Delta (f_{i_{1}},\sigma _{1})\cdots (f_{i_{m}},\sigma _{m})
\]
the set of all $n$-tuples $(\delta _{1},\ldots ,\delta _{n})\in \Delta $ for
which $\delta _{i_{1}}=\sigma _{1},\ldots ,\delta _{i_{m}}=\sigma _{m}$.

Let $\Gamma $ be a decision tree over the problem $z$. We correspond to each
complete path $\xi $ in the tree $\Gamma $ a word $\pi (\xi )$ in the
alphabet $\{(f_{i},\delta ):f_{i}\in F(z),\delta \in \{0,1\}\}$. If the
equation system $\mathcal{S}(\xi )$ is empty, then $\pi (\xi )$ is the empty
word. If $\mathcal{S}(\xi )=\{f_{i_{1}}(x)=\sigma _{1},\ldots
,f_{i_{m}}(x)=\sigma _{m}\}$, then $\pi (\xi )=(f_{i_{1}},\sigma _{1})\cdots
(f_{i_{m}},\sigma _{m})$. The decision tree $\Gamma $ over $z$ solves the
problem $z$ relative to $U$ if, for each complete path $\xi $ in $\Gamma $,
the set $\Delta _{U}(z)\pi (\xi )$ contains at most one tuple and if this
set contains exactly one tuple, then the considered tuple is assigned to the
terminal node of the path $\xi $.

As time complexity of a decision tree, we consider its \emph{depth} that is
the maximum number of working nodes in a complete path in the tree or, which
is the same, the maximum length of a complete path in the tree. We denote by
$h(\Gamma )$ the depth of a decision tree $\Gamma $.

Let $z\in P(U)$. We denote by $h_{U}^{(1)}(z)$ the minimum depth of a
decision tree over $z$, which solves $z$ relative to $U$ and uses only
attributes from $F(z)$. We denote by $h_{U}^{(2)}(z)$ the minimum depth of a
decision tree over $z$, which solves $z$ relative to $U$ and uses only
hypotheses over $z$. We denote by $h_{U}^{(3)}(z)$ the minimum depth of a
decision tree over $z$, which solves $z$ relative to $U$ and uses both
attributes from $F(z)$ and hypotheses over $z$. We denote by $h_{U}^{(4)}(z)$
the minimum depth of a decision tree over $z$, which solves $z$ relative to $%
U$ and uses only proper hypotheses over $z$. We denote by $h_{U}^{(5)}(z)$
the minimum depth of a decision tree over $z$, which solves $z$ relative to $%
U$ and uses both attributes from $F(z)$ and proper hypotheses over $z$.

For $i=1,2,3,4,5$, we define a function of Shannon type $h_{U}^{(i)}(n)$
that characterizes dependence of $h_{U}^{(i)}(z)$ on $\dim z$ in the worst
case. Let $i\in \{1,2,3,4,5\}$ and $n\in \mathbb{N}$. Then%
\[
h_{U}^{(i)}(n)=\max \{h_{U}^{(i)}(z):z\in P(U),\dim z\leq n\}.
\]

\section{Previous Results \label{S3a}}

In this section, we consider results presented in \cite{DAM}.

Let $U=(A,F)$ be an infinite binary information system and $r\in \mathbb{N}$%
. We will say that the information system $U$ is $r$-\emph{reduced} if, for
each consistent on $A$ system of equations over $U$, there exists a
subsystem of this system that has the same set of solutions and contains at
most $r$ equations. We denote by $\mathcal{R}$ the set of infinite binary
information systems each of which is $r$-reduced for some $r\in
\mathbb{N}$.

The next theorem follows from results obtained in \cite{Moshkov89} where we
considered closed classes of test tables (decision tables). It also follows
from the results obtained in \cite{Moshkov05} where we considered the
weighted depth of decision trees.  In particular, the upper bound
mentioned in item (a) of the theorem is based on a halving algorithm that
is similar to proposed in \cite{Moshkov83}.

\begin{theorem}
\label{T1a} {\rm \cite{DAM}} Let $U$ be an infinite binary information system.
Then the following statements hold:

(a) If $U\in \mathcal{R}$, then $h_{U}^{(1)}(n)=\Theta (\log n)$.

(b) If $U\notin \mathcal{R}$, then $h_{U}^{(1)}(n)=n$ for any $n\in \mathbb{N%
}$.
\end{theorem}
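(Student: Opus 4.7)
Part (b) asks for, at each $n$, a problem $z$ of dimension $n$ requiring attribute-only depth $n$. I would call a consistent system $S$ \emph{irreducible} if no proper subsystem of $S$ has the same solution set, and first prove that every subsystem of an irreducible system is irreducible: if $S''\subsetneq S'\subseteq S$ and $S''$ has the same solution set as $S'$, then $(S\setminus S')\cup S''$ is a proper subsystem of $S$ with the same solution set as $S$, contradicting irreducibility of $S$. Since $U\notin\mathcal{R}$ means that for every $r$ some consistent system has its minimal equivalent subsystem of size $>r$, and that minimal subsystem is necessarily irreducible, irreducible systems of arbitrarily large size exist; combined with the previous observation, irreducible consistent systems of every size $n$ exist. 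Given such an irreducible $\{g_1(x)=\delta_1,\ldots,g_n(x)=\delta_n\}$ with a witness $a\in A$, I set $z=(g_1,\ldots,g_n)$ and claim that any attribute-only decision tree solving $z$ must query every $g_i$ on the complete path $\xi$ followed by $a$: otherwise, irreducibility supplies $a'\in A$ with $g_j(a')=\delta_j$ for $j\neq i$ but $g_i(a')\neq\delta_i$, and this $a'$ traverses the same path $\xi$ as $a$, forcing the terminal node of $\xi$ to be labeled with two different tuples. Together with the trivial bound $h_U^{(1)}(n)\leq n$, this gives $h_U^{(1)}(n)=n$.

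For the upper bound in part (a), I use $r$-reducedness twice. First, each $\delta\in\Delta_U(z)$ admits a minimal equivalent subsystem of $\{f_1(x)=\delta_1,\ldots,f_n(x)=\delta_n\}$ of size $\leq r$, specified by at most $r$ indices and their values; distinct $\delta$'s have disjoint nonempty solution sets and hence distinct such subsystems, yielding $|\Delta_U(z)|\leq\sum_{k\leq r}\binom{n}{k}2^k=O(n^r)$. Second, I invoke the halving algorithm of \cite{Moshkov83}: at each working node, $r$-reducedness is used to exhibit an attribute $f_i$ whose split of the current candidate set $\Delta\subseteq\Delta_U(z)$ satisfies $\max(|\Delta(f_i,0)|,|\Delta(f_i,1)|)\leq\alpha|\Delta|$ for some constant $\alpha=\alpha(r)<1$; iterating yields depth $O(\log|\Delta_U(z)|)=O(r\log n)=O(\log n)$. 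For the matching lower bound, I observe that with $|F|$ infinite but only two attributes constant, the equivalence on $A$ induced by all of $F$ has infinitely many classes, so for any $k$ one can pick pairwise inequivalent $a_1,\ldots,a_k\in A$ and choose an attribute distinguishing each of the $\binom{k}{2}$ pairs, forming a problem $z$ with $\dim z=\binom{k}{2}$ and $|\Delta_U(z)|\geq k$; the leaf-counting bound $2^{h_U^{(1)}(z)}\geq|\Delta_U(z)|$ then yields $h_U^{(1)}(n)=\Omega(\log n)$.

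The main obstacle is the halving claim in the upper bound of (a): the polynomial bound $|\Delta_U(z)|=O(n^r)$ by itself is not enough, since even for a set of size linear in $n$ the decision-tree depth can be as large as $n$ (for instance, for the $(n+1)$-element family consisting of the zero vector and the $n$ unit vectors, every attribute-only tree has depth $n$). The nontrivial combinatorial content, supplied by the closed-class framework of \cite{Moshkov89} and the weighted-depth machinery of \cite{Moshkov05}, is precisely that $r$-reducedness forces every candidate set to admit a split whose largest branch is a bounded fraction of the whole; absorbing this structural property into the greedy halving is the technical heart of the theorem.
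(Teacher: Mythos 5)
First, a point of reference: this paper does not actually prove Theorem~\ref{T1a}; it is imported from \cite{DAM} (ultimately from \cite{Moshkov89,Moshkov05}), and the surrounding text only remarks that the upper bound in (a) rests on a halving algorithm in the spirit of \cite{Moshkov83}. Against that backdrop, your part (b) is complete and correct: the observation that every subsystem of an irreducible consistent system is irreducible (so irreducible systems of every cardinality exist once $U\notin\mathcal{R}$), and the argument that the complete path traversed by a witness $a$ must query all $n$ attributes, are exactly what is needed. Likewise the counting bound $|\Delta _{U}(z)|\leq \sum_{k\leq r}\binom{n}{k}2^{k}$ and the $\Omega (\log n)$ lower bound via infinitely many $F$-equivalence classes plus leaf counting are sound.

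The genuine gap is the one you flag yourself: you assert, without proof, that in an $r$-reduced system every candidate set $\Delta \subseteq \Delta _{U}(z)$ with $|\Delta |\geq 2$ admits an attribute whose larger branch contains at most $\alpha (r)|\Delta |$ tuples. As written this cites the theorem's hardest step rather than proving it. The claim is true, and it can be closed with tools already visible in this paper. Let $\bar{\delta}=(\delta _{1},\ldots ,\delta _{n})$ be the coordinatewise majority tuple of $\Delta $. If the system $\{f_{1}(x)=\delta _{1},\ldots ,f_{n}(x)=\delta _{n}\}$ is consistent, then $\bar{\delta}\in \Delta $ and $r$-reducedness yields at most $r$ coordinates on which every other tuple of $\Delta _{U}(z)$ must disagree with $\bar{\delta}$; pigeonhole gives a coordinate $i$ with $|\Delta (f_{i},\lnot \delta _{i})|\geq (|\Delta |-1)/r$, while majority gives $|\Delta (f_{i},\lnot \delta _{i})|\leq |\Delta |/2$, so $f_{i}$ is balanced with $\alpha =1-1/(2r)$. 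If the majority system is inconsistent, you need a small inconsistent subsystem, which $r$-reducedness does supply via the argument of Lemma~\ref{P1a} (every $r$-reduced system is $(r+1)$-i-reduced); then every tuple of $\Delta $ disagrees with $\bar{\delta}$ on one of those at most $r+1$ coordinates, and the same pigeonhole-plus-majority step gives $\alpha =1-1/(r+1)$. Alternatively, one can forgo a single balanced attribute and query the selected $\leq r+1$ coordinates as a block, arguing that after the block the problem is either solved or $|\Delta |$ has at least halved; this block structure is exactly the one used for $h_{U}^{(4)}$ in the proof of Theorem~\ref{T4a}(b). Without one of these completions, your upper bound in (a) is a correct plan but not a proof.
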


A subset $\{f_{1},\ldots ,f_{m}\}$ of $F$ is called \emph{independent} if,
for any $\delta _{1},\ldots ,\delta _{m}\in \{0,1\}$, the system of
equations $\{f_{1}(x)=\delta _{1},\ldots ,f_{m}(x)=\delta _{m}\}$ is
consistent on the set $A$. The empty set of attributes is independent by
definition. We now define the parameter $I(U)$, which is called the \emph{%
independence dimension} or $I$-\emph{dimension} of the information system $U$
(this notion is similar to the notion of independence number of family of
sets considered by Naiman and Wynn in \cite{Naiman96}). If, for each $m\in
\mathbb{N}$, the set $F$ contains an independent subset of the cardinality $%
m $, then $I(U)=\infty $. Otherwise, $I(U)$ is the maximum cardinality of an
independent subset of the set $F$. We denote by $\mathcal{D}$ the set of
infinite binary information systems with finite independence dimension.

Let $U=(A,F)$ be a binary information system, which is not necessary
infinite, $f\in F$, and $\delta \in \{0,1\}$. Denote
\[
A(f,\delta )=\{a:a\in A,f(a)=\delta \}.
\]%
We now define inductively the notion of $k$-\emph{information system}, $k\in
\mathbb{N}\cup \{0\}$. The binary information system $U$ is called $0$%
-information system if all attributes from $F$ are constant on the set $A$.
Let, for some $k\in \mathbb{N}\cup \{0\}$, the notion of $m$-information
system be defined for $m=0,\ldots ,k$. The binary information system $U$ is
called $(k+1)$-information system if it is not $m$-information system for $%
m=0,\ldots ,k$ and, for any $f\in F$, there exist numbers $\delta \in
\{0,1\} $ and $m\in \{0,\ldots ,k\}$ such that the information system $%
(A(f,\delta ),F)$ is $m$-information system. It is easy to show by induction
on $k$ that if $U=(A,F)$ is $k$-information system, then $U^{\prime
}=(A^{\prime },F)$, $A^{\prime }\subseteq A$, is $l$-information system for
some $l\leq k$. We denote by $\mathcal{C}$ the set of infinite binary
information systems for each of which there exists $k\in \mathbb{N}$ such
that the considered system is $k$-information system.

We proved in \cite{DAM} that $\mathcal{C\subseteq D}$. Therefore, for any
infinite binary information system $U$, either $U\in \mathcal{C}$, or $%
U\in \mathcal{D}\setminus \mathcal{C}$, or $U\notin \mathcal{D}$.

The following theorem
was presented in \cite{Moshkov01} without proof. We gave the proof of this
theorem in \cite{DAM}. Note that the lower bounds mentioned in item (b) of
the theorem were obtained by methods similar to used by Littlestone \cite%
{Littlestone88}, Maass and Tur{\'{a}}n \cite{Maass92}, and Angluin \cite%
{Angluin04} (see, for example, Lemma \ref{L0b}).

\begin{theorem}
\label{T2a} {\rm \cite{DAM}} Let $U$ be an infinite binary information system.
Then the following statements hold:

(a) If $U\in \mathcal{C}$, then $h_{U}^{(2)}(n)=O(1)$ and $%
h_{U}^{(3)}(n)=O(1)$.

(b) If $U\in \mathcal{D}\setminus \mathcal{C}$, then $h_{U}^{(2)}(n)=\Theta
(\log n)$, $h_{U}^{(3)}(n)=\Omega (\frac{\log n}{\log \log n})$, and $%
h_{U}^{(3)}(n)=O(\log n)$.

(c) If $U\notin \mathcal{D}$, then $h_{U}^{(2)}(n)=n$ and $h_{U}^{(3)}(n)=n$
for any $n\in \mathbb{N}$.
\end{theorem}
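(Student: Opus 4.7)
The plan is to split the proof according to the trichotomy $U \in \mathcal{C}$, $U \in \mathcal{D}\setminus\mathcal{C}$, $U \notin \mathcal{D}$.

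For part (a), I would induct on the smallest $k$ such that $U$ is a $k$-information system. The base case $k=0$ is immediate: every attribute is constant on $A$, so any problem is solved by a depth-$0$ tree. For the inductive step, let $z = (f_1,\ldots,f_n)$. The definition of a $(k+1)$-information system supplies, for each $i$, a value $\sigma_i \in \{0,1\}$ such that $(A(f_i,\sigma_i),F)$ is an $m_i$-information system with $m_i \leq k$. I place at the root the single hypothesis $H$ with $\delta_i = \lnot\sigma_i$. On confirmation we output $(\lnot\sigma_1,\ldots,\lnot\sigma_n)$; on a counterexample $f_i(x)=\sigma_i$ the live element lies in $A(f_i,\sigma_i)$, over which by induction $z$ is solved by a tree of depth at most $k$. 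The total depth is therefore at most $k+1$, so $h_U^{(2)}(n), h_U^{(3)}(n) = O(1)$.

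For part (c), the hypothesis $U \notin \mathcal{D}$ gives, for every $n$, an independent set $\{f_1,\ldots,f_n\} \subseteq F$; setting $z = (f_1,\ldots,f_n)$ gives $\Delta_U(z) = \{0,1\}^n$. I would run an adversary argument maintaining a live subcube $V \subseteq \{0,1\}^n$, initially the full cube. An attribute query on $f_i$ is answered on the larger half, fixing coordinate $i$ and halving $V$. For a hypothesis $H$: if $H \notin V$, some already-fixed coordinate disagrees with $H$ and the adversary returns the corresponding counterexample, leaving $V$ unchanged; if $H \in V$, the adversary returns a counterexample on a free coordinate, fixing it and halving $V$. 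In either case $|V|$ shrinks by at most a factor $2$ per query, so after $d$ queries $|V| \geq 2^{n-d}$. Since a leaf of a correct tree is consistent with at most one tuple, we need $|V| \leq 1$, forcing $d \geq n$. The matching upper bound is immediate (each hypothesis either confirms or reveals one new coordinate), so $h_U^{(2)}(n) = h_U^{(3)}(n) = n$.

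For part (b), the upper bound is a one-step majority halving on the current candidate set $\Delta \subseteq \Delta_U(z)$: pick $H$ whose $i$-th bit is the majority value of coordinate $i$ in $\Delta$, so that confirmation pinpoints the answer and every counterexample edge removes at least half of $\Delta$. Since $I(U) < \infty$ forces a Sauer--Shelah-type polynomial bound $|\Delta_U(z)| = O(n^{I(U)})$, the algorithm terminates in depth $O(\log n)$, yielding $h_U^{(2)}(n), h_U^{(3)}(n) = O(\log n)$. For the lower bounds, the assumption $U \notin \mathcal{C}$ can be unwound iteratively, producing an attribute ``tree'' of arbitrary depth that encodes a problem of genuinely growing complexity. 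Combined with a Littlestone/Maass--Tur\'an/Angluin style adversary argument of the kind later packaged in Lemma \ref{L0b}, this gives $h_U^{(2)}(n) = \Omega(\log n)$ and the slightly weaker $h_U^{(3)}(n) = \Omega(\log n / \log \log n)$, the gap accounting for the extra branching available through binary attribute queries.

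The main obstacle is the lower bound in part (b). A naive leaf-count gives only $h_U^{(2)}(z) \geq \log_{n+1}|\Delta_U(z)| = O(I(U)) = O(1)$, which is useless. The quantitative bound must therefore be extracted not from the size of $\Delta_U(z)$ but from the structural failure of $U$ to be a $k$-information system, and turning this qualitative failure into a $\log n$ growth rate is the technical heart of the argument.
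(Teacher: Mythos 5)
Your overall architecture is the right one, and it matches what the paper does: this theorem is quoted from \cite{DAM} rather than re-proved here, but the parallel proof of Theorem \ref{T4a} and the cited Lemma \ref{L0b} show the intended ingredients, and your parts (a) and (c) and your upper bound in (b) use exactly them. The induction on $k$ with a single hypothesis at the root (part (a)), the subcube adversary over an independent set together with the sequential-hypothesis upper bound (part (c)), and majority halving against the Sauer--Shelah bound $|\Delta_U(z)|\leq (4n)^{I(U)}$ (upper bound in (b)) are all correct and essentially complete; note that for arbitrary hypotheses you are spared the consistency complications that force the extra work in the proof of Theorem \ref{T4a}(a),(b).

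The genuine gap is the pair of lower bounds in part (b), and you say so yourself (``the technical heart''), but naming the obstacle is not the same as clearing it. Two things are missing. First, the bridge from $U\notin\mathcal{C}$ to a usable combinatorial object: you need that if $U$ is not a $k$-information system for any $k$, then a $d$-complete tree over $U$ exists for every $d$. This is not a one-line unwinding; it is the contrapositive that requires proof (if no $(d+1)$-complete tree exists, then every non-constant $f$ has a value $\delta$ for which $(A(f,\delta),F)$ admits only shallower complete trees, and an induction converts this into ``$U$ is a $k$-information system for some $k$ bounded in terms of $d$''). Once you have the $d$-complete tree, Lemma \ref{L0b} with $\dim z\leq 2^{d}-1$ does give $h_U^{(2)}(n)=\Omega(\log n)$. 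Second, and more seriously, Lemma \ref{L0b} bounds only $h_U^{(2)}$, and you give no argument at all for $h_U^{(3)}(n)=\Omega(\log n/\log\log n)$; ``the gap accounting for the extra branching'' is not a proof. Your own observation about the failure of naive leaf-counting cuts against you here: a mixed tree of depth $q$ on $n=2^{d}-1$ attributes has at most $(n+1)^{q}=2^{dq}$ leaves while the $d$-complete tree realizes only $2^{d}$ tuples, so leaf-counting yields $q\geq 1$ and nothing more. A genuinely different adversary (in the style of Littlestone, Maass--Tur\'{a}n, and Angluin, as the paper indicates) is needed to control how fast attribute queries can descend the complete tree, and that argument is absent from your proposal.
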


Let $U$ be an infinite binary information system. We now consider the joint
behavior of the functions $h_{U}^{(1)}(n)$, $h_{U}^{(2)}(n)$, and $%
h_{U}^{(3)}(n)$. It depends on the belonging of the information system $U$
to the sets $\mathcal{R}$, $\mathcal{D}$, and $\mathcal{C}$. We correspond
to the information system $U$ its \emph{indicator vector} $%
ind(U)=(c_{1},c_{2},c_{3})\in \{0,1\}^{3}$ in which $c_{1}=1$ if and only if
$U\in \mathcal{R}$, $c_{2}=1$ if and only if $U\in \mathcal{D}$, and $%
c_{3}=1 $ if and only if $U\in \mathcal{C}$. Proof of the following theorem
is presented in \cite{DAM}.

\begin{table}[h]
\caption{Possible indicator vectors of infinite binary information systems}
\label{tab1}\center
\begin{tabular}{|l|lll|}
\hline
& $\mathcal{R}$ & $\mathcal{D}$ & $\mathcal{C}$ \\ \hline
1 & $0$ & $0$ & $0$ \\
2 & $0$ & $1$ & $0$ \\
3 & $0$ & $1$ & $1$ \\
4 & $1$ & $1$ & $0$ \\ \hline
\end{tabular}%
\end{table}

\begin{theorem}
\label{T3a} {\rm \cite{DAM}} For any infinite binary information system, its
indicator vector coincides with one of the rows of Table \ref{tab1}. Each
row of Table \ref{tab1} is the indicator vector of some infinite binary
information system.
\end{theorem}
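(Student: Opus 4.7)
My plan splits the proof into ruling out the four indicator vectors missing from Table~\ref{tab1} and then exhibiting an infinite binary information system realizing each of the four listed rows. Three of the missing vectors have $c_2=0$ together with $c_1=1$ or $c_3=1$; the containment $\mathcal{C}\subseteq\mathcal{D}$ is already given in the excerpt, and I would supply the companion containment $\mathcal{R}\subseteq\mathcal{D}$ as follows: if $U$ is $r$-reduced and $\{f_1,\dots,f_m\}\subseteq F$ is independent, then the consistent system $\{f_i(x)=0\}_{i=1}^{m}$ has no proper equivalent subsystem, because dropping $f_j(x)=0$ leaves a shorter system which, by independence, admits a solution with $f_j(x)=1$; hence $m\le r$, so $I(U)\le r$.

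The main work is excluding $(1,1,1)$: proving that no infinite binary information system is simultaneously $r$-reduced and a $k$-information system. I would argue by induction on $k$. The base $k=0$ is immediate since then all attributes are constant on $A$, forcing $|F|\le 2$. The base $k=1$ is the heart of the matter. For each non-constant $f\in F$, let $\delta_f$ be the value with every attribute constant on $B_f:=A(f,\delta_f)$. I would then establish two structural facts: first, for distinct non-constant $f,g\in F$, the sets $B_f$ and $B_g$ are either equal or disjoint, because a strict nesting $B_f\subsetneq B_g$ would force $f\equiv\delta_f$ on $B_g$ (by constancy of $f$ on $B_g$), contradicting the defining property $f\equiv\lnot\delta_f$ on the non-empty set $B_g\setminus B_f$; second, at most two distinct attributes share the same $B$, since $f$ is determined by $B_f$ together with the choice $\delta_f\in\{0,1\}$. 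Combined with $|F|=\infty$, these yield infinitely many pairwise disjoint non-empty sets $B_{f_1},B_{f_2},\dots$, and the system $\{f_i(x)=\lnot\delta_{f_i}:i=1,\dots,r+1\}$ is then consistent (any remaining $B_{f_j}$ supplies a witness) while no subsystem of size $\le r$ has the same solution set (dropping one equation restores a disjoint $B_{f_i}$), contradicting $r$-reducedness. For the inductive step $k\to k+1$, I would pick $f\in F$ and $\delta\in\{0,1\}$ with $(A(f,\delta),F)$ being $m$-information for some $m\le k$, observe that $r$-reducedness is inherited when the universe is restricted to $A(f,\delta)$, and apply the induction hypothesis to produce an irreducible consistent system on the restriction, which lifts back to $U$ by adjoining the single equation $f(x)=\delta$.

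For realization, I would exhibit one concrete example per row. For $(0,0,0)$, take $A=\{0,1\}^{\mathbb{N}}$ with $F$ the coordinate projections $f_i(a)=a_i$: every finite subset is independent, so $I(U)=\infty$. For $(0,1,0)$, take $A=\mathbb{R}$ with $F$ the indicators of all closed bounded intervals: $m$ intervals partition $\mathbb{R}$ into $O(m)$ cells, bounding $I(U)$; a family of pairwise disjoint intervals yields a consistent irreducible ``in none of these'' system; and any restriction still carries infinitely many non-constant attributes on any non-degenerate subinterval, which an inductive argument uses to rule out $U\in\mathcal{C}$. For $(0,1,1)$, take $A=\mathbb{N}$ with $F$ the indicators of singletons: each $A(f_i,1)=\{i\}$ trivially makes all of $F$ constant, so $U$ is $1$-information and lies in $\mathcal{C}\subseteq\mathcal{D}$, while the $k=1$ base case argument above shows $U\notin\mathcal{R}$. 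For $(1,1,0)$, take $A=\mathbb{R}$ with $F=\{f_t:t\in\mathbb{R}\}$, $f_t(x)=1$ iff $x\ge t$: every consistent system reduces to the greatest lower bound together with the least upper bound (so $U$ is $2$-reduced) and $I(U)=1$, while each restriction $(A(f_t,\delta),F)$ is order-isomorphic to $U$ itself, preventing the $k$-information recursion from terminating.

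The principal obstacle is the inductive step in excluding $(1,1,1)$: one must verify that the restricted system $(A(f,\delta),F)$ still carries infinitely many distinct non-constant attributes so that the induction hypothesis genuinely applies, and then show that the irreducible system produced there lifts back to $U$ within the allotted $r$-equation budget rather than collapsing through use of the single added equation $f(x)=\delta$.
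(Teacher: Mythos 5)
You should first note that this paper does not actually contain a proof of Theorem~\ref{T3a}: it is quoted from \cite{DAM} (``Proof of the following theorem is presented in \cite{DAM}''), so there is no in-paper argument to compare yours against. Judged on its own terms, most of your plan is sound. The containment $\mathcal{C}\subseteq\mathcal{D}$ is given; your proof that $\mathcal{R}\subseteq\mathcal{D}$ (an independent set $\{f_{1},\ldots,f_{m}\}$ yields the consistent system $\{f_{i}(x)=0\}_{i=1}^{m}$ with no equivalent proper subsystem, so $I(U)\le r$) is correct and disposes of the missing rows $(1,0,0)$, $(0,0,1)$, $(1,0,1)$; your four realizations are essentially the systems $U_{2}$, $U_{3}$, $U_{6}$, $U_{7}$ of Section~\ref{S6a} and work (modulo routine details such as ruling out $U\in\mathcal{C}$ for the interval and threshold systems, which is cleanest via $d$-complete trees, Lemma~\ref{L0b}, and Theorem~\ref{T2a}(a)); and your $k=0$ and $k=1$ cases for excluding $(1,1,1)$ are correct --- for $k=1$ the sets $B_{f}$ can also overlap without nesting, but your constancy argument covers that case too.

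The genuine gap is the inductive step in excluding $(1,1,1)$, which you flag as ``the principal obstacle'' but do not resolve. Your induction hypothesis is a statement about systems with infinitely many pairwise distinct attributes, yet the restriction $(A(f,\delta),F)$ supplied by the definition of a $(k+1)$-information system may carry only finitely many distinct attributes, in which case the hypothesis does not apply to it at all. This is not a pathological worry: it is exactly what happens for $U_{6}$ (every $A_{6}(p_{i},1)$ is a singleton) and for $U_{5}$ (every distinguished restriction is a finite set), and for those systems the failure of $r$-reducedness is witnessed by ad hoc irreducible systems rather than by lifting anything from a restriction. One can observe that $g\mapsto (g|_{A(f,0)},g|_{A(f,1)})$ is injective, so at least one of the two restrictions retains infinitely many distinct attributes; but that restriction need not be the one witnessing the drop to $m\le k$ --- it can again be $(k+1)$-information --- so the induction does not close. (The lifting step you also worried about is in fact fine: if $S$ is irreducible over $A(f,\delta)$, then either $S$ or $S\cup\{f(x)=\delta\}$ is irreducible over $A$, and restriction preserves $r$-reducedness.) As written, the exclusion of $(1,1,1)$ --- the only part of the first half of the theorem not already covered by the two containments in $\mathcal{D}$ --- remains unproved; an alternative route would be to show that every infinite $r$-reduced system admits $d$-complete trees for every $d$ and then invoke Lemma~\ref{L0b} together with Theorem~\ref{T2a}(a), but that claim also requires an argument you have not supplied.
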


\begin{table}[h]
\caption{Summary of Theorems \protect\ref{T1a}-\protect\ref{T3a}}
\label{tab2}\center
\begin{tabular}{|l|lll|lll|}
\hline
& $\mathcal{R}$ & $\mathcal{D}$ & $\mathcal{C}$ & $h_{U}^{(1)}(n)$ & $%
h_{U}^{(2)}(n)$ & $h_{U}^{(3)}(n)$ \\ \hline
$V_{1}$ & $0$ & $0$ & $0$ & $n$ & $n$ & $n$ \\
$V_{2}$ & $0$ & $1$ & $0$ & $n$ & $\Theta (\log n)$ & $\Omega (\frac{\log n}{%
\log \log n}),O(\log n)$ \\
$V_{3}$ & $0$ & $1$ & $1$ & $n$ & $O(1)$ & $O(1)$ \\
$V_{4}$ & $1$ & $1$ & $0$ & $\Theta (\log n)$ & $\Theta (\log n)$ & $\Omega (%
\frac{\log n}{\log \log n}),O(\log n)$ \\ \hline
\end{tabular}%
\end{table}

For $i=1,2,3,4$, we denote by $V_{i}$ the class of all infinite binary
information systems, which indicator vector coincides with the $i$th row of
Table \ref{tab1}. Table \ref{tab2} summarizes Theorems \ref{T1a}-\ref{T3a}.
The first column contains the name of \emph{complexity class} $V_{i}$. The
next three columns describe the indicator vector of information systems from
this class. The last three columns $h_{U}^{(1)}(n)$, $h_{U}^{(2)}(n)$, and $%
h_{U}^{(3)}(n)$ contain information about behavior of the functions $%
h_{U}^{(1)}(n)$, $h_{U}^{(2)}(n)$, and $h_{U}^{(3)}(n)$ for information
systems from the class $V_{i}$.

\section{Main Results \label{S4a}}

In this section, we consider main results of this paper.

Let $U=(A,F)$ be a binary information system and $r\in \mathbb{N}$. We will
say that the information system $U$ is $r$\emph{-i-reduced} if, for each
inconsistent on $A$ system of equations over $U$, there exists a subsystem
of this system that is inconsistent and contains at most $r$ equations. We
denote by $\mathcal{I}$ the set of infinite binary information systems each
of which is $r$-i-reduced for some $r\in \mathbb{N}$.

We proved in \cite{DAM} that $\mathcal{C\subseteq D}$. Therefore, for any
infinite binary information system $U$, either $U\in \mathcal{C\cap I}$, or $%
U\in (\mathcal{D}\setminus \mathcal{C)\cap I}$, or $U\in \mathcal{D}%
\setminus \mathcal{I}$, or $U\notin \mathcal{D}$.
The following theorem was presented in \cite{Moshkov01} without proof.

\begin{theorem}
\label{T4a}Let $U$ be an infinite binary information system. Then the
following statements hold:

(a) If $U\in \mathcal{C\cap I}$, then $h_{U}^{(4)}(n)=O(1)$ and $%
h_{U}^{(5)}(n)=O(1)$.

(b) If $U\in (\mathcal{D}\setminus \mathcal{C)\cap I}$, then $%
h_{U}^{(4)}(n)=\Theta (\log n)$, $h_{U}^{(5)}(n)=\Omega (\frac{\log n}{\log
\log n})$, and $h_{U}^{(5)}(n)=O(\log n)$.

(c) If $U\in \mathcal{D}\setminus \mathcal{I}$ and $i\in \{4,5\}$, then $%
h_{U}^{(i)}(n)\geq n-1$ for infinitely many $n\in \mathbb{N}$ and $%
h_{U}^{(i)}(n)\leq n$ for any $n\in \mathbb{N}$

(d) If $U\notin \mathcal{D}$, then $h_{U}^{(4)}(n)=n$ and $h_{U}^{(5)}(n)=n$
for any $n\in \mathbb{N}$.
\end{theorem}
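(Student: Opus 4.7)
My plan is to prove the four cases of the theorem in parallel with the four cases of Theorem~\ref{T2a}, exploiting two monotonicity inequalities together with a generic upper bound of~$n$. Since proper hypotheses form a subclass of arbitrary hypotheses, every decision tree legal for $h_U^{(4)}$ is also legal for $h_U^{(2)}$, and similarly for $h_U^{(5)}$ versus $h_U^{(3)}$; hence $h_U^{(2)}(z)\le h_U^{(4)}(z)$ and $h_U^{(3)}(z)\le h_U^{(5)}(z)$. This already transfers all the lower bounds of parts~(b) and~(d) from Theorem~\ref{T2a}. I will also establish the generic upper bound $h_U^{(4)}(n),h_U^{(5)}(n)\le n$ by a bit-correction strategy: maintain the set $K$ of positions already determined from past counterexamples, pick any tuple $t\in\Delta_U(z)$ extending the recorded values on $K$ (such a $t$ exists because the true solution $z(a)$ does), and ask the proper hypothesis $\{f_1(x)=t_1,\ldots,f_n(x)=t_n\}$. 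A ``yes'' recovers $z(a)$; otherwise the counterexample $\{f_i(x)=\neg t_i\}$ must satisfy $i\notin K$, so $|K|$ grows by one, and after at most $n$ queries $K$ is complete. This settles part~(d) (the lower bound being $h_U^{(4)}\ge h_U^{(2)}=n$ from Theorem~\ref{T2a}(c)) and the $\le n$ half of part~(c).

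For the lower bound in part~(c), fix $U\in\mathcal{D}\setminus\mathcal{I}$. For each $r\in\mathbb{N}$, non-$r$-i-reducedness supplies an inconsistent system over $U$ without any inconsistent subsystem of size $\le r$; inside it, extract a minimal inconsistent subsystem $S=\{f_1(x)=\delta_1^*,\ldots,f_m(x)=\delta_m^*\}$, which forces $m>r$. Minimality makes each $(m-1)$-subsystem $S\setminus\{f_i(x)=\delta_i^*\}$ consistent, witnessed by some $a_i$ for which necessarily $f_i(a_i)=\neg\delta_i^*$ (otherwise $a_i$ would satisfy $S$). For the problem $z=(f_1,\ldots,f_m)$, the $m$ tuples $t_i=z(a_i)$ are pairwise distinct elements of $\Delta_U(z)$, each differing from $(\delta_1^*,\ldots,\delta_m^*)$ at a single coordinate. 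I will then run an adversary that keeps a set $T\subseteq\{t_1,\ldots,t_m\}$ of still-possible solutions and, on every query, picks the response minimising the reduction of $|T|$. A case analysis---crucially using that the unique hypothesis coinciding with $S$ is non-proper and therefore forbidden---shows that every attribute query and every proper hypothesis query removes at most one element from $T$. Shrinking $|T|$ to one therefore takes at least $m-1$ queries, and since $m\to\infty$ with $r$ this gives $h_U^{(4)}(n),h_U^{(5)}(n)\ge n-1$ for infinitely many $n$.

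For the upper bounds in parts~(a) and~(b), I would start from the decision trees constructed in Theorem~\ref{T2a}(a,b) for $h_U^{(2)}$ and $h_U^{(3)}$, of depths $O(1)$ and $O(\log n)$ respectively, and rewrite every internal node whose hypothesis $H$ happens to be non-proper. By $U\in\mathcal{I}$, $H$ contains an inconsistent subsystem $H'\subseteq H$ of size $s\le r$. For the $h_U^{(5)}$ tree the rewrite is immediate: replace the node by the complete binary tree querying the $s$ attributes of $H'$; on any input at least one equation of $H'$ must fail, and its index selects the appropriate child of the original node. This inflates depth by the constant factor $r$, preserving $O(1)$ in~(a) and $O(\log n)$ in~(b). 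For the $h_U^{(4)}$ tree the replacement must itself use only proper hypotheses; my plan is to cycle through at most $s$ auxiliary proper hypotheses obtained from $H$ by flipping the bits at positions of $H'$ one at a time and refilling the remaining coordinates so that each auxiliary tuple belongs to $\Delta_U(z)$. Inconsistency of $H'$ guarantees that within these $s$ queries either a ``yes'' occurs (which identifies the solution directly) or a counterexample to $H$ is exposed, routing us back to the original subtree. The depth blow-up is again $O(r)=O(1)$, and the matching lower bounds in~(b) are obtained by applying the two monotonicity inequalities to Theorem~\ref{T2a}(b).

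The main obstacle I expect is precisely the $h_U^{(4)}$ rewrite just sketched: producing, for each non-proper $H$, a short list of \emph{proper} auxiliary hypotheses that together either confirm the solution or expose a counterexample to~$H$. Verifying that each auxiliary hypothesis is itself proper requires extending the flipped partial assignment to a full element of $\Delta_U(z)$, and it is here that the combined $\mathcal{C}\cap\mathcal{I}$ assumption in part~(a) and the $(\mathcal{D}\setminus\mathcal{C})\cap\mathcal{I}$ assumption in part~(b) are genuinely used---rather than just $\mathcal{I}$ alone---since the richness of $\Delta_U(z)$ supplied by the $\mathcal{D}$-side control is what makes these extensions available.
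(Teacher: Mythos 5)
Your treatment of parts (c) and (d), the monotonicity inequalities $h_U^{(2)}\le h_U^{(4)}$ and $h_U^{(3)}\le h_U^{(5)}$, the generic bound $h_U^{(4)}(n)\le n$ via the bit-correction strategy, and the adversary argument for the $n-1$ lower bound in (c) are all correct and essentially coincide with the paper's proof (the paper phrases your adversary as a complete path all of whose edges carry equations $\{f_i(x)=\delta_i^*\}$, which exists precisely because the minimal inconsistent tuple is not a proper hypothesis). The replacement of each non-proper hypothesis node by at most $r$ attribute queries from an inconsistent core is also sound and yields the $h_U^{(5)}$ upper bounds in (a) and (b).

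The genuine gap is the one you flag yourself: the $h_U^{(4)}$ upper bounds in (a) and (b). Cycling through auxiliary proper hypotheses $H_1,\dots,H_s$, where $H_j$ flips the bit of $H$ at the $j$-th position of the inconsistent core $H'$ and ``refills'' the remaining coordinates to reach $\Delta_U(z)$, does not work: to make $H_j$ proper you must in general also alter coordinates of $H$ \emph{outside} $H'$, and the oracle may answer with a counterexample at exactly such a refilled coordinate $l$. That answer asserts $f_l(x)=\delta_l$, i.e.\ it \emph{confirms} $H$ at position $l$ rather than refuting it, so it is not a counterexample to $H$ and does not localize the failure inside $H'$; the adversary can do this on all $s$ queries, leaving you unable to route into any child of the original node. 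The appeal to the ``richness of $\Delta_U(z)$'' does not repair this. The paper avoids the issue by two different devices. For (a) it first builds the constant-depth attributes-plus-proper-hypotheses tree and then applies Lemma \ref{L2a}, which simulates such a tree of depth $h$ by a proper-hypotheses-only tree of depth $2^h-1$ (an attribute node is handled by concatenating the two converted subtrees and disambiguating each resulting leaf, which carries two candidate answer tuples at least one of which is consistent, with one extra proper hypothesis); the exponential blow-up is harmless when $h=O(1)$. For (b) that blow-up would be fatal, so the paper constructs the $h^{(4)}$ tree directly by a halving argument on $\Delta_U(z)$, $|\Delta_U(z)|\le(4n)^{I(U)}$: when the majority-vote hypothesis is inconsistent, $r$-i-reducedness forces every inconsistent subsystem to contain a ``balanced'' equation, hence the subsystem of unbalanced equations is consistent, and the proper hypothesis built from one of its solutions guarantees that every counterexample shrinks $\Delta$ by a factor $1-1/r$. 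You would need to import one of these two mechanisms (or an equivalent) to complete parts (a) and (b) for $h_U^{(4)}$.
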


Let $U$ be an infinite binary information system. We now consider the joint
behavior of the functions $h_{U}^{(1)}(n)$, $h_{U}^{(2)}(n)$, $%
h_{U}^{(3)}(n) $, $h_{U}^{(4)}(n)$, and $h_{U}^{(5)}(n)$. It depends on the
belonging of the information system $U$ to the sets $\mathcal{R}$, $\mathcal{%
D}$, $\mathcal{C}$, and $\mathcal{I}$. We correspond to the information
system $U$ its \emph{extended indicator vector} $%
eind(U)=(c_{1},c_{2},c_{3},c_{4})\in \{0,1\}^{4}$ in which $c_{1}=1$ if and
only if $U\in \mathcal{R}$, $c_{2}=1$ if and only if $U\in \mathcal{D}$, $%
c_{3}=1$ if and only if $U\in \mathcal{C} $, and $c_{4}=1$ if and only if $%
U\in \mathcal{I}$.

\begin{table}[h]
\caption{Possible extended indicator vectors of infinite binary information systems}
\label{tab3}\center
\begin{tabular}{|l|llll|}
\hline
& $\mathcal{R}$ & $\mathcal{D}$ & $\mathcal{C}$ & $\mathcal{I}$ \\ \hline
1 & $0$ & $0$ & $0$ & $0$ \\
2 & $0$ & $0$ & $0$ & $1$ \\
3 & $0$ & $1$ & $0$ & $0$ \\
4 & $0$ & $1$ & $0$ & $1$ \\
5 & $0$ & $1$ & $1$ & $0$ \\
6 & $0$ & $1$ & $1$ & $1$ \\
7 & $1$ & $1$ & $0$ & $1$ \\ \hline
\end{tabular}%
\end{table}

\begin{theorem}
\label{T5a} For any infinite binary information system, its extended
indicator vector coincides with one of the rows of Table \ref{tab3}. Each
row of Table \ref{tab3} is the extended indicator vector of some infinite
binary information system.
\end{theorem}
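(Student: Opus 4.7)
The proof splits naturally into two parts: showing that only the seven listed vectors can occur, and exhibiting an infinite binary information system realizing each of them.

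For the restriction part, Theorem \ref{T3a} already pins down the first three coordinates $(c_1,c_2,c_3)$ of $eind(U)$: they must form one of the four rows of Table \ref{tab1}. Coupled with the two possible values of $c_4$, this leaves at most eight candidate vectors. To cut this down to seven, I would prove the single inclusion $\mathcal{R}\subseteq \mathcal{I}$, which is precisely what excludes $(1,1,0,0)$. The argument is short: suppose $U$ is $r$-reduced, and let $S$ be an inconsistent system of equations over $U$. Choose a minimal inconsistent subsystem $S^{*}\subseteq S$ and any equation $e\in S^{*}$; then $S^{*}\setminus \{e\}$ is consistent, so by $r$-reducedness there is a subsystem $T\subseteq S^{*}\setminus \{e\}$ with $|T|\leq r$ and the same solution set as $S^{*}\setminus \{e\}$. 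Hence $T\cup \{e\}$ has the same solution set as $S^{*}$, namely the empty set, and is a subsystem of $S$ of size at most $r+1$. This shows that $U$ is $(r+1)$-i-reduced, so $U\in \mathcal{I}$, and in particular no $U$ can realize $(1,1,0,0)$.

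For the realization part, the four information systems used in the proof of Theorem \ref{T3a} already witness the four admissible triples $(c_1,c_2,c_3)$. What remains is either to check directly the value of $c_4$ attained by each, or to adapt the construction so that both possible values of $c_4$ are achieved whenever both appear in Table \ref{tab3}. Concretely, for each of the triples $(0,0,0)$, $(0,1,0)$, $(0,1,1)$ I would supply two infinite binary information systems, one in $\mathcal{I}$ and one outside $\mathcal{I}$, while for $(1,1,0)$ the restriction step forces $c_4=1$, so a single construction suffices. A natural tool is a disjoint-union or product construction: glue the ``base'' system from Theorem \ref{T3a} to a small auxiliary module that either controls the length of minimal inconsistent systems (forcing membership in $\mathcal{I}$) or embeds a sequence of attributes whose appropriate negations yield arbitrarily long minimal inconsistent systems (forcing $U\notin \mathcal{I}$). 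In each case I would verify, using the explicit structure, membership in each of $\mathcal{R}$, $\mathcal{D}$, $\mathcal{C}$, $\mathcal{I}$.

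The main technical obstacle lies in the realization step: the auxiliary module used to toggle $c_4$ must not disturb any of the other three memberships. In particular, enlarging $F$ to escape $\mathcal{I}$ should not make $I(U)$ infinite or destroy the $k$-information-system property, and restricting $F$ to enforce $\mathcal{I}$ should not inadvertently force $U$ into $\mathcal{R}$. I expect to handle this by choosing the auxiliary attributes so that they live on a separate copy of the ground set and are logically independent of the base system, so that the combined system's $I$-dimension, $k$-information-system status, and reduction behavior decompose cleanly into those of the two pieces. Once the correct decomposition lemma is in hand, verifying each of the seven rows becomes a routine but genuinely case-by-case computation, and it is there rather than in the exclusion argument that the bulk of the work sits.
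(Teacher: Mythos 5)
Your exclusion argument is correct and is essentially the paper's: Theorem \ref{T3a} fixes the first three coordinates, and the inclusion $\mathcal{R}\subseteq \mathcal{I}$ (the paper's Lemma \ref{P1a}) is exactly what rules out $(1,1,0,0)$. Your derivation of that inclusion via a minimal inconsistent subsystem $S^{*}$ and the observation that $T\cup\{e\}$ has the same (empty) solution set as $S^{*}$ is a sound variant of the paper's argument, which instead starts from a maximal consistent subsystem of $S$; both yield $(r+1)$-i-reducedness.

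The gap is in the realization half, which is where the theorem's actual content and the paper's effort lie: the paper exhibits seven explicit systems $U_{1},\ldots ,U_{7}$ and proves seven verification lemmas. You construct none of them, and the gluing strategy you propose is circular precisely for the rows that matter. Consider row $5$, the vector $(0,1,1,0)$: since $\mathcal{C}\subseteq \mathcal{D}$ and $\mathcal{R}\subseteq \mathcal{I}$, \emph{any} infinite binary information system in $\mathcal{C}\setminus \mathcal{I}$ automatically has this extended indicator vector. So the ``auxiliary module'' that is supposed to force $U\notin \mathcal{I}$ without leaving $\mathcal{C}$ would itself already be the required witness, and producing it is the whole difficulty --- escaping $\mathcal{I}$ needs arbitrarily long minimal inconsistent systems, which a finite or ``logically independent'' module cannot supply on its own; they must arise from interaction between the two attribute families on a common ground set. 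That is exactly how the paper's $U_{5}$ works (a triangular universe $\bigcup_{i}\{(i,1),\ldots ,(i,i)\}$ with row indicators and point indicators, so that $\{f_{i}(x)=1,f_{i,1}(x)=0,\ldots ,f_{i,i}(x)=0\}$ is minimally inconsistent of length $i+1$ while every restriction drops the information-system rank), and similarly for row $3$, where $U_{3}$ mixes point indicators $p_{i}$ with thresholds $l_{i}$ on $\mathbb{N}$. You also never state or prove the decomposition lemma your plan depends on, and it is not clear that a disjoint union with attributes extended by constants preserves the inductive definition of $k$-information system or the finiteness of $I(U)$. As it stands, the second sentence of the theorem is asserted but not proved.
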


\begin{table}[h]
\caption{Summary of Theorems \protect\ref{T1a}, \protect\ref{T2a}, \protect\ref{T4a}, and \protect\ref{T5a}}
\label{tab4}\center
\begin{tabular}{|l|llll|lllll|}
\hline
& $\mathcal{R}$ & $\mathcal{D}$ & $\mathcal{C}$ & $\mathcal{I}$ & $%
h_{U}^{(1)}(n)$ & $h_{U}^{(2)}(n)$ & $h_{U}^{(3)}(n)$ & $h_{U}^{(4)}(n)$ & $%
h_{U}^{(5)}(n)$ \\ \hline
$\mathcal{V}_{1}$ & $0$ & $0$ & $0$ & $0$ & $n$ & $n$ & $n$ & $n$ & $n$ \\
$\mathcal{V}_{2}$ & $0$ & $0$ & $0$ & $1$ & $n$ & $n$ & $n$ & $n$ & $n$ \\
$\mathcal{V}_{3}$ & $0$ & $1$ & $0$ & $0$ & $n$ & $\Theta (\log n)$ & $%
\approx \log n$ & $\approx n$ & $\approx n$ \\
$\mathcal{V}_{4}$ & $0$ & $1$ & $0$ & $1$ & $n$ & $\Theta (\log n)$ & $%
\approx \log n$ & $\Theta (\log n)$ & $\approx \log n$ \\
$\mathcal{V}_{5}$ & $0$ & $1$ & $1$ & $0$ & $n$ & $O(1)$ & $O(1)$ & $\approx
n$ & $\approx n$ \\
$\mathcal{V}_{6}$ & $0$ & $1$ & $1$ & $1$ & $n$ & $O(1)$ & $O(1)$ & $O(1)$ &
$O(1)$ \\
$\mathcal{V}_{7}$ & $1$ & $1$ & $0$ & $1$ & $\Theta (\log n)$ & $\Theta
(\log n)$ & $\approx \log n$ & $\Theta (\log n)$ & $\approx \log n$ \\ \hline
\end{tabular}%
\end{table}

For $i=1,\ldots ,7$, we denote by $\mathcal{V}_{i}$ the class of all
infinite binary information systems, which extended indicator vector
coincides with the $i$th row of Table \ref{tab3}. Table \ref{tab4}
summarizes Theorems \ref{T1a}, \ref{T2a}, \ref{T4a}, and \ref{T5a}. The first
column contains the name of \emph{complexity class} $\mathcal{V}_{i}$. The
next four columns describe the extended indicator vector of information
systems from this class. The last five columns $h_{U}^{(1)}(n)$, ..., $%
h_{U}^{(5)}(n)$ contain information about behavior of the functions $%
h_{U}^{(1)}(n)$, ..., $h_{U}^{(5)}(n)$ for information systems from the
class $\mathcal{V}_{i}$. The notation $\approx \log n$ in a column $%
h_{U}^{(i)}(n)$ means that $h_{U}^{(i)}(n)$ $=\Omega (\frac{\log n}{\log
\log n})$ and $h_{U}^{(i)}(n)$ $=O(\log n)$. The notation $\approx n$ in a
column $h_{U}^{(i)}(n)$ means that $h_{U}^{(i)}(n)$ $\leq n$ for any $n\in
\mathbb{N}
$ and $h_{U}^{(i)}(n)$ $\geq n-1$ for infinitely many $n\in
\mathbb{N}
$.

Note that it is possible to consider the union $\mathcal{V}_{1}\cup
\mathcal{V}_{2}$ of the complexity classes $\mathcal{V}_{1}$ and $\mathcal{V}%
_{2}$ as one complexity class since the functions $h_{U}^{(1)}(n),\ldots
,h_{U}^{(5)}(n)$ have the same behavior for information systems from these
classes. However, in this paper, we study $\mathcal{V}_{1}$ and $\mathcal{V%
}_{2}$ as different complexity classes.

\section{Proof of Theorem \protect\ref{T4a} \label{S5a}}

In this section, we prove Theorem \ref{T4a}. First, we consider several
auxiliary statements. The following result was obtained in \cite{DAM}.

\begin{lemma}
\label{L1a} {\rm\cite{DAM}} $\mathcal{C}\subseteq \mathcal{D}$.
\end{lemma}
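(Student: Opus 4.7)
The plan is to prove the stronger quantitative statement: if $U=(A,F)$ is a $k$-information system, then its independence dimension satisfies $I(U)\leq k$. Since $\mathcal{C}$ consists of those $U$ that are $k$-information systems for some $k\in\mathbb{N}$, this immediately yields $I(U)<\infty$, i.e.\ $U\in\mathcal{D}$, giving the inclusion $\mathcal{C}\subseteq\mathcal{D}$.

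I would proceed by induction on $k$. For the base case $k=0$, every attribute is constant on $A$, so no single attribute $f\in F$ admits both equations $\{f(x)=0\}$ and $\{f(x)=1\}$ being consistent on $A$. Hence no singleton is independent, and $I(U)=0$. For the inductive step, assume the claim holds for all $m\leq k$, and let $U=(A,F)$ be a $(k+1)$-information system. Take any independent subset $\{f_1,\ldots,f_l\}\subseteq F$ with $l\geq 1$; the goal is to show $l\leq k+1$. By the definition of a $(k+1)$-information system applied to $f_1$, there exist $\delta\in\{0,1\}$ and $m\in\{0,\ldots,k\}$ such that $U'=(A(f_1,\delta),F)$ is an $m$-information system.

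The key observation is that $\{f_2,\ldots,f_l\}$ is independent in $U'$: for any $(\delta_2,\ldots,\delta_l)\in\{0,1\}^{l-1}$, independence of $\{f_1,\ldots,f_l\}$ in $U$ produces some $a\in A$ satisfying $f_1(a)=\delta$ and $f_i(a)=\delta_i$ for $i=2,\ldots,l$, and this $a$ lies in $A(f_1,\delta)$. Applying the inductive hypothesis to the $m$-information system $U'$ (with $m\leq k$) gives $l-1\leq I(U')\leq m\leq k$, so $l\leq k+1$, completing the induction.

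The argument is essentially bookkeeping once one picks the right quantitative strengthening, so I do not expect a serious obstacle; the only delicate point is verifying that passing from $U$ to $(A(f_1,\delta),F)$ preserves independence of the remaining attributes, which uses the consistency of the witness system over all of $A$ together with the freedom to impose $f_1(x)=\delta$. The remark in the excerpt that any subsystem $(A',F)$ of a $k$-information system is itself an $l$-information system with $l\leq k$ is not strictly needed here, since the definition already supplies an $m\leq k$ at the chosen branch $A(f_1,\delta)$.
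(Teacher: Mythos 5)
Your proof is correct. The paper itself does not reprove this lemma (it is imported from \cite{DAM} as a black box), so there is no in-paper argument to compare against, but your quantitative strengthening $I(U)\leq k$ for every $k$-information system is the natural route and every step checks out: the base case is immediate from constancy of the attributes, and the key step --- that independence of $\{f_2,\ldots,f_l\}$ survives restriction to $A(f_1,\delta)$ because a witness for $\{f_1(x)=\delta,f_2(x)=\delta_2,\ldots,f_l(x)=\delta_l\}$ on $A$ lies in $A(f_1,\delta)$ --- is exactly right. You are also correct that the paper's remark about subsystems $(A',F)$ being $l$-information systems for some $l\leq k$ is not needed, since the definition of a $(k+1)$-information system already hands you an $m\leq k$ at the branch $A(f_1,\delta)$ it selects.
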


From this lemma it follows that, for any infinite binary information system $U$,
either $U\in \mathcal{C\cap I}$, or $U\in (\mathcal{D}\setminus \mathcal{%
C)\cap I}$, or $U\in \mathcal{D}\setminus \mathcal{I}$, or $U\notin \mathcal{%
D}$.

\begin{lemma}
\label{L2a} Let $U=(A,F)$ be a binary information system, $z$ be a problem
over $U$, and $\Gamma _{1}$ be a decision tree over $z$ that solves\ the
problem $z$\ relative to $U$ and uses both attributes from $F(z)$ and proper
hypotheses over $z$. Then there exists a decision tree $\Gamma _{2}$ over $z$
that solves\ the problem $z$\ relative to $U$, uses only proper hypotheses
over $z$, and satisfies the inequality $h(\Gamma _{2})\leq 2^{h(\Gamma
_{1})}-1$.
\end{lemma}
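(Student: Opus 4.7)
The plan is to prove the statement by induction on $k = h(\Gamma_1)$, strengthening the hypothesis to handle an arbitrary subset $A' \subseteq A$: for any decision tree $\Gamma_1$ over $z$ that solves $z$ relative to $(A',F)$ using attributes from $F(z)$ together with hypotheses proper for $U$, there exists $\Gamma_2$ solving $z$ relative to $(A',F)$ using only proper-for-$U$ hypotheses with $h(\Gamma_2) \le 2^{h(\Gamma_1)} - 1$. The original lemma is the case $A' = A$; the base case $k = 0$ is immediate by taking $\Gamma_2 = \Gamma_1$.

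For the inductive step with $k \ge 1$, I split on the type of the root of $\Gamma_1$. If the root is a proper-for-$U$ hypothesis $H$, its $n+1$ subtrees (one for confirmation, $n$ for counterexamples) have depth at most $k-1$ and each solves $z$ on the corresponding refinement of $A'$. Applying the induction hypothesis yields replacement subtrees of depth $\le 2^{k-1} - 1$ using only proper hypotheses; attaching them under a root labeled $H$ produces $\Gamma_2$ of depth at most $1 + (2^{k-1} - 1) = 2^{k-1} \le 2^k - 1$.

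The delicate case is when the root is an attribute $f_i$, with subtrees $\Gamma_1^0, \Gamma_1^1$ of depth $\le k-1$ solving $z$ on $A'_\delta := A' \cap A(f_i,\delta)$ for $\delta \in \{0,1\}$. Induction yields $\Gamma_2^0, \Gamma_2^1$, each of depth $\le 2^{k-1} - 1$, using only proper-for-$U$ hypotheses. I may assume each leaf label of $\Gamma_2^0$ lies in $\Delta_U(z)$: labels at leaves reachable by some $a \in A'_0$ are forced to be $z(a) \in \Delta_U(z)$ by correctness, and any remaining leaves can be freely relabeled by a fixed element of $\Delta_U(z)$. I then construct $\Gamma_2$ by replacing each leaf of $\Gamma_2^0$ carrying label $\sigma$ by a working node asking the proper hypothesis $\sigma$; its confirmed branch leads to a terminal labeled $\sigma$, and each of its $n$ counterexample branches leads to a fresh copy of $\Gamma_2^1$. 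The depth bound is $(2^{k-1} - 1) + 1 + (2^{k-1} - 1) = 2^k - 1$.

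The hard part is the correctness argument for this attribute-elimination step. Fix $a \in A'$ and run $\Gamma_2$; suppose the traversal of $\Gamma_2^0$ reaches the leaf with label $\sigma$. If the verification query $\sigma$ is confirmed, then $z(a) = \sigma$ and the output is correct. If instead a counterexample is returned, I claim $a \in A'_1$: were $f_i(a) = 0$, then $a \in A'_0$, and the inductive correctness of $\Gamma_2^0$ would force $\sigma = z(a)$, making the verification confirm rather than counter; hence $f_i(a) = 1$. Consequently the attached copy of $\Gamma_2^1$, correct on $A'_1$, outputs $z(a)$. The key insight is that one need not identify which counterexample coordinate was returned nor explicitly recover $f_i(a)$; the mere occurrence of any counterexample implicitly witnesses $f_i(a) = 1$, and this is precisely what permits a single verification query to take the place of the attribute query at the root while only doubling the depth budget.
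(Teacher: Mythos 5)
Your proposal is correct and follows the same overall plan as the paper's proof: induction on the depth, a routine treatment of a hypothesis-labelled root, and elimination of an attribute-labelled root at the cost of running both converted subtrees plus one extra hypothesis query, giving $2(2^{k}-1)+1=2^{k+1}-1$. The two constructions differ, however, in how the attribute case is arranged. The paper first builds the tree that runs the converted $0$-subtree $T_0'$ and then, from each of its leaves, the converted $1$-subtree $T_1'$, and only at the end of each complete path inserts a single hypothesis query to decide between the two candidate tuples $\bar{\delta}$ (proposed by $T_0'$) and $\bar{\sigma}$ (proposed by $T_1'$), querying whichever of the two is proper. You instead verify the answer $\sigma$ of $\Gamma_2^{0}$ immediately after $\Gamma_2^{0}$ finishes and enter $\Gamma_2^{1}$ only on a counterexample, using the observation that any counterexample to $\sigma$ already certifies $f_i(a)=1$; this needs your normalisation that every leaf label of $\Gamma_2^{0}$ lies in $\Delta_U(z)$, which you justify correctly, and it has the minor advantage of terminating early on confirmation. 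A second, more substantive difference is your induction statement: you fix $U$ once, quantify over subsets $A'\subseteq A$, and keep ``proper'' meaning ``proper for $U$'' throughout. The paper instead applies the inductive hypothesis to the restricted systems $(A(f_i,\lnot\delta_i),F)$ with properness re-interpreted relative to them, even though a hypothesis proper for $U$ need not remain proper for such a restriction; your formulation sidesteps this mismatch and is the cleaner way to make the induction fully rigorous.
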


\begin{proof}
We prove this statement by the induction on the depth $h(\Gamma _{1})$ of
the decision tree $\Gamma _{1}$. Let $h(\Gamma _{1})=0$. Then, as the
decision tree $\Gamma _{2}$, we can take the decision tree $\Gamma _{1}$. It
is clear that $h(\Gamma _{2})=2^{h(\Gamma _{1})}-1$. We now assume that the
considered statement is true for any binary information system, any problem
over this system, and any decision tree over the considered problem that
solves\ this problem, uses both attributes and proper hypotheses, and has
depth at most $k$, $k\geq 0$.

Let $U=(A,F)$ be a binary information system, $z=(f_{1},\ldots ,f_{n})$ be a
problem over $U$, and $\Gamma _{1}$ be a decision tree over $z$ that solves\
the problem $z$\ relative to $U$, uses both attributes from $F(z)$ and
proper hypotheses over $z$, and satisfies the condition $h(\Gamma _{1})=k+1$%
. We now show that there exists a decision tree $\Gamma _{2}$ over $z$,
which solves\ the problem $z$\ relative to $U$, uses only proper hypotheses
over $z$, and which depth is at most $2^{k+1}-1$.

Let the root of $\Gamma _{1}$ be labeled with a proper hypothesis $%
H=\{f_{1}(x)=\delta _{1},\ldots ,f_{n}(x)=\delta _{n}\}$. Then there are $%
n+1 $ edges, which leave the root, are labeled with the systems of equations
$H$, $\{f_{1}(x)=\lnot \delta _{1}\}$, ..., $\{f_{n}(x)=\lnot \delta _{n}\}$%
, and enter the roots of subtrees $G_{0},G_{1},\ldots ,G_{n}$ of the tree $%
\Gamma _{1}$, respectively. It is clear that, for $i=1,\ldots ,n$,  $G_{i}$ is a
decision tree over $z$, which  solves the problem $z$ relative to the binary
information system $U_{i}=(A(f_{i},\lnot \delta_i ),F)$, uses only attributes
and proper hypotheses for $U_{i}$, and satisfies the inequality $%
h(G_{i})\leq k$. Using the inductive hypothesis, we obtain that, for $%
i=1,\ldots ,n$, there exists a decision tree $G_{i}^{\prime }$ over $z$
that solves\ the problem $z$\ relative to $U_{i}$, uses only proper
hypotheses for $U_{i}$, and satisfies the inequalities $h(G_{i}^{\prime
})\leq 2^{h(G_{i})}-1\leq 2^{k}-1$. Let $G_{0}^{\prime }$ be the decision tree, which contains only one node labeled with the tuple $(\delta_1 , \ldots , \delta_n)$.
We denote by $\Gamma _{2}$ the decision
tree over $z$ that is obtained from the decision tree $\Gamma _{1}$ by
replacing the subtrees $G_{0},G_{1},\ldots ,G_{n}$ with the subtrees $%
G_{0}^{\prime },G_{1}^{\prime },\ldots ,G_{n}^{\prime }$. It is easy to show that $\Gamma
_{2}$ is a decision tree over $z$, which solves the problem $z$ relative to $U$, uses only proper hypotheses
for $U$, and satisfies the inequalities $h(\Gamma _{2})\leq 2^{k}-1+1\leq
2^{h(\Gamma _{1})}-1$.

Let the root of $\Gamma _{1}$ be labeled with an attribute $f_{i}$. Then
there are two edges, which leave the root, are labeled with
the
systems of equations $\{f_{i}(x)=0\}$ and $\{f_{i}(x)=1\}$, and enter the roots of subtrees $T_{0}$ and $T_{1}$ of the tree $%
\Gamma _{1}$, respectively. It is clear that, for $p=0,1$, $T_{p}$ is a decision tree over $z$, which
 solves the problem $z$ relative to the binary information system $%
U_{p}=(A(f_{i},p),F)$, uses only attributes and proper hypotheses for $U_{p}$%
, and satisfies the inequality $h(T_{p})\leq k$. Using the inductive
hypothesis, we obtain that, for $p=0,1$, there exists a decision tree $%
T_{p}^{\prime }$ over $z$ that solves\ the problem $z$\ relative to $%
U_{p} $, uses only proper hypotheses for $U_{p}$, and satisfies the
inequalities $h(T_{p}^{\prime })\leq 2^{h(T_{p})}-1\leq 2^{k}-1$. We denote
by $T$ the decision tree obtained from the decision tree $T_{0}^{\prime }$
by replacing each terminal node of $T_{0}^{\prime }$ with the decision
tree $T_{1}^{\prime }$.

Denote by $\Gamma _{2}$ the decision tree
obtained from $T$ by the following transformation of each complete path $\xi
$ in $T$. If $\mathcal{A}(\xi )=\emptyset $, then we keep the path $\xi $
untouched. Let $\mathcal{A}(\xi )\neq \emptyset $, $\bar{\delta}=(\delta
_{1},\ldots ,\delta _{n})$ be the tuple that was attached to the terminal
node of the tree $T_{0}^{\prime }$ through which the path $\xi $ passes,
and $\bar{\sigma}=(\sigma _{1},\ldots ,\sigma _{n})$ be the tuple attached
to the terminal node of $\xi $. Since $\mathcal{A}(\xi )\neq\emptyset $, at
least one of the tuples $\bar{\delta}$ and $\bar{\sigma}$ belongs to the set
$\Delta _{U}(z)$. Let, for the definiteness, $\bar{\delta}\in \Delta _{U}(z)$%
. Denote $H=\{f_{1}(x)=\delta _{1},\ldots ,f_{n}(x)=\delta _{n}\}$. We
replace the terminal node of the path $\xi $ with the working node labeled
with the hypothesis $H$, which is proper for $U$. There are $n+1$ edges that
leave this node and are labeled with the systems of equations $%
H,\{f_{1}(x)=\lnot \delta _{1}\},...,\{f_{n}(x)=\lnot \delta _{n}\}$,
respectively. The edge labeled with $H$ enters to the terminal node labeled
with the tuple $\bar{\delta}$. All other edges enter to terminal nodes
labeled with the tuple $\bar{\sigma}$. One can show that $\Gamma _{2}$ is a
decision tree over $z$ that solves\ the problem $z$\ relative to $U$, uses
only proper hypotheses for $U$, and satisfies the relations $h(\Gamma
_{2})\leq 2(2^{k}-1)+1=2^{h(\Gamma _{1})}-1$.
\end{proof}

\begin{lemma}
\label{L3a} Let $U=(A,F)$ be an infinite binary information system. Then $%
h_{U}^{(3)}(n)\leq h_{U}^{(5)}(n)\leq h_{U}^{(4)}(n)\leq n$ and $%
h_{U}^{(2)}(n)\leq h_{U}^{(4)}(n)$ for any $n\in \mathbb{N}$.
\end{lemma}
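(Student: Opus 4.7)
The plan is to split the four inequalities into the three ``inclusion-of-query-models'' bounds, which are immediate from the definitions, and the single nontrivial bound $h_U^{(4)}(n)\le n$, which I would establish by induction on $\dim z$.

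First I would observe that the chain $h_U^{(3)}(n)\le h_U^{(5)}(n)\le h_U^{(4)}(n)$ and the inequality $h_U^{(2)}(n)\le h_U^{(4)}(n)$ all follow from the fact that the family of admissible decision trees for the Shannon function on the right is contained in the family for the one on the left. A tree that uses only proper hypotheses over $z$ is, in particular, a tree that uses both attributes and proper hypotheses (simply never query an attribute), and is also a tree that uses only hypotheses (proper ones being a special case). Similarly, any tree with attributes and proper hypotheses is a tree with attributes and arbitrary hypotheses. Hence the minima cannot increase as one passes from the right-hand family to the left-hand one.

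For the substantive inequality $h_U^{(4)}(n)\le n$, I would induct on $n=\dim z$. For the inductive step, given a problem $z=(f_{1},\ldots,f_{n})$ I would pick any tuple $\bar\delta=(\delta_{1},\ldots,\delta_{n})\in \Delta_{U}(z)$, which exists by nonemptiness of $A$, and place the proper hypothesis $H=\{f_{1}(x)=\delta_{1},\ldots,f_{n}(x)=\delta_{n}\}$ at the root. The ``yes'' edge goes to a terminal node labeled $\bar\delta$. On each counterexample edge $\{f_{i}(x)=\lnot\delta_{i}\}$ for which $A(f_{i},\lnot\delta_{i})\ne\emptyset$, I would invoke the inductive hypothesis on the $(n-1)$-dimensional problem $z_{i}=(f_{1},\ldots,f_{i-1},f_{i+1},\ldots,f_{n})$ over $U_{i}=(A(f_{i},\lnot\delta_{i}),F)$, obtaining a proper-hypothesis decision tree $G_{i}$ of depth at most $n-1$, and then lift $G_{i}$ to a tree $G_{i}'$ over $z$ by inserting the already-known coordinate $f_{i}(x)=\lnot\delta_{i}$ into every hypothesis in $G_{i}$ and into every terminal label. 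Edges corresponding to $A(f_{i},\lnot\delta_{i})=\emptyset$, as well as the extra ``$f_{i}(x)=\delta_{i}$'' branches introduced by lifting, are unreachable and may be capped by arbitrary leaves. The resulting tree has depth at most $1+(n-1)=n$.

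The main obstacle I anticipate is the bookkeeping in the lifting step, i.e.\ verifying that padding every hypothesis of $G_{i}$ with the fixed coordinate $\lnot\delta_{i}$ produces a valid decision tree over $z$ that still uses only hypotheses proper for $U$. This reduces to two observations: any element of $A(f_{i},\lnot\delta_{i})$ witnessing propriety of a hypothesis of $G_{i}$ for $U_{i}$ also witnesses propriety of the lifted hypothesis for $U$, and the new $f_{i}$-counterexample branch is inconsistent with the path from the root (which already forces $f_{i}(x)=\lnot\delta_{i}$) and so never carries any input of $A$. Granted these points, correctness and the depth bound are transported from $G_{i}$ to $G_{i}'$ without change, and the induction closes.
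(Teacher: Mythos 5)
Your proposal is correct and follows essentially the same route as the paper: the three containment inequalities are handled exactly as you describe, and your inductive construction for $h_{U}^{(4)}(n)\leq n$ is a recursive packaging of the paper's iterative strategy of repeatedly asking a proper hypothesis that agrees with all counterexamples received so far, so that each counterexample permanently fixes one coordinate. The only difference is presentational: the paper argues directly that after at most $n$ such queries the problem is solved, whereas you formalize the same bound by induction on $\dim z$ with a lifting step.
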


\begin{proof}
It is clear, that $h_{U}^{(3)}(z)\leq h_{U}^{(5)}(z)\leq h_{U}^{(4)}(z)$ and
$h_{U}^{(2)}(z)\leq h_{U}^{(4)}(z)$ for any problem $z$ over $U$. Therefore $%
h_{U}^{(3)}(n)\leq h_{U}^{(5)}(n)\leq h_{U}^{(4)}(n)$ and $%
h_{U}^{(2)}(n)\leq h_{U}^{(4)}(n)$ for any $n\in \mathbb{N}$.

We now consider an arbitrary problem $z=(f_{1},\ldots ,f_{n})$ over $U$ and
a decision tree over $z$, which uses only proper hypotheses for $U$ and solves the problem $z$ relative to $U$ in the following way. For
a given element $a\in A$, the first query is about an arbitrary proper
hypothesis $H_{1}=\{f_{1}(x)=\delta _{1},\ldots ,f_{n}(x)=\delta _{n}\}$ for
$U$. If the answer is $H_{1}$, then the problem $z$ is solved for the
element $a$. If, for some $i\in \{1,\ldots ,n\}$, the answer is $%
\{f_{i}(x)=\lnot \delta _{i}\}$, then the second query is about a proper
hypothesis $H_{2}=\{f_{1}(x)=\sigma _{1},\ldots ,f_{n}(x)=\sigma _{n}\}$
such that $\sigma _{i}=\lnot \delta _{i}$. If the answer is $H_{2}$, then
the problem $z$ is solved for the element $a$. If, for some $j\in \{1,\ldots
,n\}$, the answer is $\{f_{j}(x)=\lnot \sigma _{j}\}$, then the third query
is about a proper hypothesis $H_{3}=\{f_{1}(x)=\gamma _{1},\ldots
,f_{n}(x)=\gamma _{n}\}$ such that $\gamma _{i}=\lnot \delta _{i}$ and $%
\gamma _{j}=\lnot \sigma _{j}$, etc. It is clear that after at most $n$
queries the problem $z$ for the element $a$ will be solved. Thus, $%
h_{U}^{(4)}(z)\leq n$. Since $z$ is an arbitrary problem over $U$, we have $%
h_{U}^{(4)}(n)\leq n$ for any $n\in \mathbb{N}$.
\end{proof}

\begin{proof}[Proof of Theorem \protect\ref{T4a}]
(a) Let $r\in \mathbb{N}$. We now show by induction on $k\in \mathbb{N}\cup
\{0\}$ that, for each binary $r$-i-reduced $k$-information system $U
$ (not necessary infinite) for each problem $z$ over $U$, the inequality $%
h_{U}^{(5)}(z)\leq rk$ holds.

Let $U=(A,F)$ be a binary $r$-i-reduced $0$%
-information system and $z$ be a problem over $U$. Since all attributes from
$F(z)$ are constant on $A$, the set $\Delta _{U}(z)$ contains only one
tuple. Therefore the decision tree consisting of one node labeled with
this tuple solves the problem $z$ relative to $U$, and $h_{U}^{(5)}(z)=0$.

Let $k\in \mathbb{N}\cup \{0\}$ and, for each $m$, $0\leq m\leq k$, the
considered statement hold. Let us show that it holds for $k+1$. Let $U=(A,F)$
be a binary $r$-i-reduced $(k+1)$-information system and $z=(f_{1},\ldots
,f_{n})$ be a problem over $U$. For $i=1,\ldots ,n$, choose a number $\delta
_{i}\in \{0,1\}$ such that the information system $(A(f_{i},\lnot \delta
_{i}),F)$ is $m_{i}$-information system where $1\leq m_{i}\leq k$. It is
easy to show that $(A(f_{i},\lnot \delta _{i}),F)$ is $r$-i-reduced
information system. Using the inductive hypothesis, we conclude that, for $%
i=1,\ldots ,n$, there is a decision tree $\Gamma _{i}$ over $z$, which uses
both attributes from $F(z)$ and proper hypotheses for $%
(A(f_{i},\lnot \delta _{i}),F)$, solves the problem $z$ relative to $(A(f_{i},\lnot
\delta _{i}),F)$, and has depth at most $rm_{i}$.

Let the hypothesis $H=\{f_{1}(x)=\delta _{1},\ldots ,f_{n}(x)=\delta _{n}\}$
be proper for $U$. We denote by $T_{1}$ a decision tree in which the root is
labeled with the hypothesis $H$, the edge leaving the root and labeled with $%
H$ enters the terminal node labeled with the tuple $(\delta _{1},\ldots
,\delta _{n})$, and for $i=1,\ldots ,n$, the edge leaving the root and
labeled with $\{f_{i}(x)=\lnot \delta _{i}\}$ enters the root of the tree $%
\Gamma _{i}$. One can show that $T_{1}$ is a decision tree over  $z$, which uses both attributes and proper hypotheses for $U$,
solves the problem $z$ relative to $%
U$, and satisfies the inequalities $h(T_{1})\leq rk+1\leq r(k+1)$.

Let the hypothesis $H$ be not proper for $U$. Then the equation system $%
\{f_{1}(x)=\delta _{1},\ldots ,f_{n}(x)=\delta _{n}\}$ is inconsistent on $A$%
, and there exists its subsystem $\{f_{i_{1}}(x)=\delta _{i_{1}},\ldots
,f_{i_{t}}(x)=\delta _{i_{t}}\}$, which is inconsistent on $A$ and for which
$t\leq r$. We denote by $G$ a decision tree over $z$ with $2^{t}$ terminal
nodes in which each terminal node is labeled with $n$-tuple $(0,\ldots ,0)$,
and each complete path contains $t$ working nodes labeled with attributes $%
f_{i_{1}},\ldots ,f_{i_{t}}$ starting from the root. We denote by $T_{2}$ a decision tree obtained from the decision tree $G$
by transformation of each complete path $\xi $ in $G$. Let
$\{f_{i_1}(x)=\sigma _{1}\}, \ldots , \{f_{i_t}(x)=\sigma _{t}\}$
be equation systems
attached to edges leaving the
working nodes of $\xi $ labeled with the attributes $f_{i_{1}},\ldots
,f_{i_{t}}$, respectively. If $(\sigma _{1},\ldots ,\sigma _{t})=(\delta
_{i_{1}},\ldots ,\delta _{i_{t}})$, then we keep the path $\xi $ untouched.
Otherwise, let $j$ be the minimum number from the set $\{1,\ldots ,t\}$ such
that $\sigma _{j}=\lnot \delta _{i_{j}}$. In this case, we replace the
terminal node of the path $\xi $ with the root of the decision tree $\Gamma
_{i_{j}}$. One can show that $T_{2}$ is a decision tree over  $z$, which uses both attributes and proper hypotheses for $U$,
solves the problem $z$ relative to $%
U$, and satisfies the inequalities $h(T_{2})\leq rk+t\leq r(k+1)$. Therefore, $h_{U}^{(5)}(z)\leq r(k+1)$
for any problem $z$ over $U$.

Let $U\in \mathcal{C\cap I}$. Then $U$ is $r$-i-reduced $k$-information
system for some natural $r$ and $k$, and $h_{U}^{(5)}(z)\leq rk$ for each problem $z$ over $U$. From Lemma \ref{L2a} it follows that $h_{U}^{(4)}(z)\leq 2^{rk}-1$ for each
problem $z$ over $U$. Therefore $%
h_{U}^{(4)}(n)=O(1)$ and $h_{U}^{(5)}(n)=O(1)$.

(b) Let $U=(A,F)\in (\mathcal{D}\setminus \mathcal{C)\cap I}$. By Lemma \ref{L3a},
$h_{U}^{(5)}(n)\geq h_{U}^{(3)}(n)$ and $h_{U}^{(4)}(n)\geq h_{U}^{(2)}(n)$
for any $n\in \mathbb{N}$. Using the fact that $U\in \mathcal{D}\setminus
\mathcal{C}$ and Theorem \ref{T2a}, we obtain $h_{U}^{(2)}(n)=\Omega (\log
n) $ and $h_{U}^{(3)}(n)=\Omega (\frac{\log n}{\log \log n})$. Therefore $%
h_{U}^{(4)}(n)=\Omega (\log n)$ and $h_{U}^{(5)}(n)=\Omega (\frac{\log n}{%
\log \log n})$.

Since the information systems $U$ belongs to the set $%
\mathcal{D}$, it has finite $I$-dimension $I(U)$. Since $U\in \mathcal{I}$,
the information system $U$ is $r$-i-reduced for some natural $r$. We assume that $r\geq 2$. We
can do it because each $t$-i-reduced information system, $t\in \mathbb{N}$,
is $(t+1)$-i-reduced.

We now show that $h_{U}^{(4)}(n)=O(\log n)$. Let $%
z=(f_{1},\ldots ,f_{n})$ be an arbitrary problem over $U$. From Lemma 5.1
\cite{Moshkov05} it follows that $|\Delta _{U}(z)|\leq (4n)^{I(U)}$. The
proof of this lemma is based on the results similar to ones obtained by
Sauer \cite{Sauer72} and Shelah \cite{Shelah72}.

We consider a decision tree $\Gamma $ over $z$, which solves the problem $z$ relative to
$U$ and uses only proper hypotheses for $U$. This tree is constructed by a variant
of the halving algorithm  \cite{Angluin04,Hegedus95,Hellerstein96}. We describe the
work of this tree for an arbitrary element $a$ from $A$. Set $\Delta =$ $%
\Delta _{U}(z)$. If $|\Delta |=1$, then the only $n$-tuple from $\Delta $ is
the solution $z(a)$ to the problem $z$ for the element $a$. Let $|\Delta
|\geq 2$. For $i=1,\ldots ,n$, we denote by $\delta _{i}$ a number from $%
\{0,1\}$ such that $|\Delta (f_{i},\delta _{i})|\geq |\Delta (f_{i},\lnot
\delta _{i})|$.

Let the system of equations $H=\{f_{1}(x)=\delta _{1},\ldots
,f_{n}(x)=\delta _{n}\}$ be consistent on $A$. In this case, the root of $%
\Gamma $ is labeled with the proper hypothesis $H$. After this query, either the
problem $z$ will be solved (if the answer is $H$) or the number of remaining
tuples in $\Delta $ will be at most $|\Delta |/2$ (if the answer is a
counterexample $\{f_{i}(x)=\lnot \delta _{i}\}$).

Let the system of equations $H$ be inconsistent on $A$. For any inconsistent
subsystem $B$ of $H$, there exists a subsystem $C$ of $B$, which is
inconsistent and contains at most $r$ equations. Then the system $C$
contains at least one equation $f_{i}(x)=\delta _{i}$ such that $|\Delta
(f_{i},\lnot \delta _{i})|\geq |\Delta |/r$. If we assume
the contrary, we obtain that the system $C$ is consistent, which is
impossible. Let $f_{i}\in \{f_{1},\ldots ,f_{n}\}$. The attribute $f_{i}$ is
called balanced if $|\Delta (f_{i},\lnot \delta _{i})|\geq |\Delta
|/r$, and unbalanced if $|\Delta (f_{i},\lnot \delta
_{i})|<|\Delta |/r$.

We denote by $H_{u}$ the subsystem
of $H$ consisting of all equations $f_{i}(x)=\delta _{i}$ from $H$ with
unbalanced attributes $f_{i}$. We now show that the system $H_{u}$ is consistent. Let us
assume the contrary. Then it will contain at least one equation for balanced
attribute, which is impossible. Let $a$ be a solution from $A$ to the
system $H_{u}$, and $f_{1}(a)=\sigma _{1},\ldots ,f_{n}(a)=\sigma _{n}$.
Then the system of equations $P=\{f_{1}(x)=\sigma _{1},\ldots
,f_{n}(x)=\sigma _{n}\}$ is consistent on $A$.

In the considered case, the
root of $\Gamma $ is labeled with the proper hypothesis $P$. After this query,
either the problem $z$ will be solved (if the answer is $P$), or the number
of remaining tuples in $\Delta $ will be less than $|\Delta |/r$ (if the
answer is a counterexample $\{f_{i}(x)=\lnot \sigma _{i}\}$ and $f_{i}$ is
an unbalanced attribute), or the number of remaining tuples in $\Delta $
will be at most $|\Delta |/2$ (if the answer is a counterexample $%
\{f_{i}(x)=\lnot \sigma _{i}\}$, $\sigma _{i}=\delta _{i}$, and $f_{i}$ is a
balanced attribute), or the number of remaining tuples in $\Delta $ will be
at most $|\Delta |(1-1/r)$ (if the answer is a counterexample $%
\{f_{i}(x)=\lnot \sigma _{i}\}$, $\sigma _{i}=\lnot \delta _{i}$, and $f_{i}$
is a balanced attribute).

After the first query ($H$ or $P$) of the decision tree $\Gamma $%
, either the problem $z$ will be solved or the number of remaining tuples in
$\Delta $ will be at most $|\Delta |(1-1/r)$. In the latter case, when the
answer is a counterexample of the kind $\{f_{i}(x)=\lnot \gamma _{i}\}$ ($%
\gamma _{i}=\delta _{i}$ if the first query is $H$ and $\gamma _{i}=\sigma
_{i}$ if the first query is $P$) set $\Delta =$ $\Delta _{U}(z)(f_{i},\lnot
\gamma _{i})$. It is easy to show that the information system $%
(A(f_{i},\lnot \gamma _{i}),F)$ is also $r$-i-reduced. The decision tree $%
\Gamma $ continues to work with the element $a$ and the set of $n$-tuples $%
\Delta $ in the same way.

Let during the work with the element $a$, the
 decision tree $\Gamma$ make $q$ queries. After the $(q-1)$th query, the
number of remaining $n$-tuples in the set $\Delta $ is at least two and at
most $(4n)^{I(U)}(1-1/r)^{q-1}$. Therefore $(1+1/(r-1))^{q}\leq (4n)^{I(U)}$
and $q\ln (1+1/(r-1))\leq I(U)\ln (4n)$. Taking into account that $\ln
(1+1/m)>1/(m+1)$ for any natural $m$, we obtain $q\,<rI(U)\ln (4n)$. So
during the processing of the element $a$, the decision tree $\Gamma $ makes
at most $rI(U)\ln (4n)$ queries. Since $a$ is an arbitrary element from $A$,
the depth of $\Gamma $ is at most $rI(U)\ln (4n)$ and $h_{U}^{(4)}(z)\le rI(U)\ln (4n)$. Since $z$ is an arbitrary
problem over $U$, we obtain $h_{U}^{(4)}(n)=O(\log n)$. By Lemma \ref{L3a}, $h_{U}^{(5)}(n)=O(\log n)$.

(c) Let $U=(A,F)\in \mathcal{D}\setminus \mathcal{I}$. From Lemma \ref{L3a}
it follows that $h_{U}^{(5)}(n)\leq h_{U}^{(4)}(n)\leq n$ for any $n\in
\mathbb{N}$. We now show that, for any  $m\in \mathbb{N}$, there exists a natural $n$ such that $n \ge m$, $h_{U}^{(4)}(n)\ge n-1$, and $h_{U}^{(5)}(n)\ge n-1$.

Let $m\in \mathbb{N}$. Since $U\notin \mathcal{I}$, there
exists a system of equations $P$ over $U$ with $n\geq m$ equations such that
$P$ is inconsistent but each proper subsystem of $P$ is consistent on $A$.
Let, for the definiteness, $P=\{f_{1}(x)=0,\ldots ,f_{n}(x)=0\}$. Consider
the problem $z=(f_{1},\ldots ,f_{n})$ over $U$. Then, for $i=1,\ldots ,n$,
the set $\Delta _{U}(z)$ contains $n$-tuple $\bar{\delta}_{i}=(0,\ldots
,0,1,0,\ldots ,0)$ in which all digits with the exception of the $i$th one
are equal to $0$.

Let $\Gamma $ be a decision tree over $z$ that solves the problem $z$ relative to $U$ and uses both attributes and proper hypotheses for $U$.
We
consider a complete path $\xi $ in $\Gamma $ in which each edge is labeled with an equation system of the kind $\{f_i(x)=0\}$, where $f_i \in F(z)$. Such complete path exists since $P$ is not a proper hypothesis. Let  $\pi (\xi
)=(f_{i_{1}},0)\cdots (f_{i_{t}},0)$ for some $f_{i_{1}},\ldots ,f_{i_{t}}\in F(z)$.
Since $\Gamma $ solves the problem $z$, the set $\Delta
_{U}(z)\pi (\xi )$ contains at most one tuple. If we assume that $t<n-1$, we
obtain that $\Delta _{U}(z)\pi (\xi )$ contains at least two tuples.
Therefore $t\geq n-1$ and $h(\Gamma )\geq n-1$. Thus, $h_{U}^{(5)}(z)\geq
n-1 $, $h_{U}^{(5)}(n)\geq n-1$ and, by Lemma \ref{L3a},  $h_{U}^{(4)}(n)\geq n-1$.

(d) Let $U\notin \mathcal{D}$. From Lemma \ref{L3a} it follows that $%
h_{U}^{(3)}(n)\leq h_{U}^{(5)}(n)\leq h_{U}^{(4)}(n)\leq n$ for any $n\in
\mathbb{N}$. By Theorem \ref{T2a}, $h_{U}^{(3)}(n)=n$ for any $n\in
\mathbb{N}$. Thus, $h_{U}^{(5)}(n)=h_{U}^{(4)}(n)=n$ for any $n\in \mathbb{N}
$.

\end{proof}

\section{Proof of Theorem \protect\ref{T5a} \label{S6a}}

First, we consider several auxiliary statements.

\begin{lemma}
\label{P1a} $\mathcal{R}\subseteq \mathcal{I}$.
\end{lemma}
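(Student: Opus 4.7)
The plan is to show that if $U=(A,F)$ is $r$-reduced, then $U$ is $(r+1)$-i-reduced. Fix an arbitrary inconsistent equation system $S = \{g_1(x)=\delta_1,\ldots,g_m(x)=\delta_m\}$ over $U$. The goal is to produce an inconsistent subsystem of $S$ of size at most $r+1$.

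First, I would pass to a \emph{minimal} inconsistent subsystem $S^{*}\subseteq S$; such a subsystem exists because $S$ is finite. By minimality, removing any single equation from $S^{*}$ yields a consistent system on $A$. Pick an arbitrary equation $e = (g_{i_0}(x)=\delta_{i_0})$ of $S^{*}$ and set $T = S^{*}\setminus\{e\}$. Then $T$ is consistent on $A$.

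Now I would apply the $r$-reduced hypothesis to $T$: there is a subsystem $T'\subseteq T$ with $|T'|\le r$ and the same solution set on $A$ as $T$. The key observation is that $T'\cup\{e\}$ is inconsistent on $A$. Indeed, its solution set equals the set of solutions of $T'$ that satisfy $e$, which coincides with the set of solutions of $T$ that satisfy $e$, which is empty since $T\cup\{e\}=S^{*}$ is inconsistent. Hence $T'\cup\{e\}$ is an inconsistent subsystem of $S$ containing at most $r+1$ equations.

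Since $S$ was an arbitrary inconsistent system over $U$, this shows that $U$ is $(r+1)$-i-reduced, so $U\in\mathcal{I}$. There is no real obstacle here; the only point that needs care is the translation between ``same solution set'' (the $r$-reduced property, about consistent systems) and ``inconsistency is preserved'' (the $r$-i-reduced property), which is handled cleanly by peeling off a single equation after first dropping to a minimal inconsistent subsystem.
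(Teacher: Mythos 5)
Your proof is correct and follows essentially the same route as the paper: both arguments isolate a consistent subsystem (you peel one equation off a minimal inconsistent subsystem, the paper takes a maximum consistent subsystem), compress it to at most $r$ equations with the same solution set via the $r$-reduced property, and then restore inconsistency by adding back a single equation, yielding an inconsistent subsystem of size at most $r+1$. The two variants differ only in how the consistent core is chosen.
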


\begin{proof}
Let $U=(A,F)\in \mathcal{R}$. Then $U$ is $r$-restricted for some natural $r$%
. We now show that $U$ is $(r+1)$-i-restricted. Let $S$ be an arbitrary
inconsistent on $A$ equation system over $U$ and $S^{\prime }$ be a
subsystem of $S$ with the maximum number of equations that is consistent.
Since $U$ is $r$-restricted, the system $S^{\prime }$ has a subsystem $%
S^{\prime \prime }$ with at most $r$ equations and the same set of solutions
on $A$ as the system $S^{\prime }$. It is clear that there exists an
equation $f(x)=\delta $ from $S$ such that the system of equations $%
S^{\prime }\cup \{f(x)=\delta \}$ is inconsistent. Then the subsystem $%
S^{\prime \prime }\cup \{f(x)=\delta \}$ of $S$ with at most $r+1$ equations
is inconsistent. Therefore $U$ is $(r+1)$-i-restricted and $U\in \mathcal{I}$%
.
\end{proof}

\begin{table}[h]
\caption{All extensions of rows of Table \protect\ref{tab1}}
\label{tab5}\center
\begin{tabular}{|l|llll|}
\hline
& $\mathcal{R}$ & $\mathcal{D}$ & $\mathcal{C}$ & $\mathcal{I}$ \\ \hline
1 & $0$ & $0$ & $0$ & $0$ \\
2 & $0$ & $0$ & $0$ & $1$ \\
3 & $0$ & $1$ & $0$ & $0$ \\
4 & $0$ & $1$ & $0$ & $1$ \\
5 & $0$ & $1$ & $1$ & $0$ \\
6 & $0$ & $1$ & $1$ & $1$ \\
7 & $1$ & $1$ & $0$ & $1$ \\
8 & $1$ & $1$ & $0$ & $0$ \\ \hline
\end{tabular}%
\end{table}

\begin{lemma}
\label{P2a} For any infinite binary information system, its extended
indicator vector coincides with one of the rows of Table \ref{tab3}.
\end{lemma}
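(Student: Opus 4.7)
The plan is to derive the lemma as a direct combinatorial consequence of the classification of three-dimensional indicator vectors (Theorem \ref{T3a}) together with the new inclusion $\mathcal{R}\subseteq\mathcal{I}$ (Lemma \ref{P1a}). First, I would apply Theorem \ref{T3a} to an arbitrary infinite binary information system $U$ to conclude that the triple $(c_{1},c_{2},c_{3})$ formed by the first three coordinates of $eind(U)$ must coincide with one of the four rows of Table \ref{tab1}.

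Next, since the fourth coordinate $c_{4}$, which encodes membership in $\mathcal{I}$, is a priori unconstrained, each of the four triples splits into two four-tuples, producing the eight candidate extended indicator vectors exhibited in Table \ref{tab5}. The task then reduces to ruling out exactly the extensions in Table \ref{tab5} that do not appear in Table \ref{tab3}; a direct comparison shows that the only such row is row 8, namely $(1,1,0,0)$.

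Finally, to exclude $(1,1,0,0)$, I would invoke Lemma \ref{P1a}: any system $U\in\mathcal{R}$ also lies in $\mathcal{I}$, so $c_{1}=1$ forces $c_{4}=1$. Hence the extended indicator vector of $U$ must belong to the seven rows listed in Table \ref{tab3}, and the first half of Theorem \ref{T5a} (the realizability of each of these rows by some concrete infinite binary information system is a separate matter and is not needed here) follows.

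I do not anticipate any real obstacle: the lemma is a bookkeeping consequence of prior results. The only minor point requiring care is verifying that no additional constraint is silently imposed on $c_{4}$ by the remaining three rows of Table \ref{tab1}; a quick check of the definitions of $\mathcal{C}$, $\mathcal{D}$ and $\mathcal{I}$ confirms that each of the seven surviving four-tuples is genuinely possible in principle, which is why exactly seven rows remain after discarding row 8 of Table \ref{tab5}.
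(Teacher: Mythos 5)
Your proposal is correct and follows essentially the same route as the paper's own proof: apply Theorem \ref{T3a} to constrain the first three coordinates, list the eight possible extensions, and eliminate $(1,1,0,0)$ via Lemma \ref{P1a}. No gaps.
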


\begin{proof}
Let $U$ be an infinite binary information system and $$%
eind(U)=(c_{1},c_{2},c_{3},c_{4}).$$ Then $ind(U)=(c_{1},c_{2},c_{3})$. By
Theorem \ref{T3a}, $(c_{1},c_{2},c_{3})$ is a row of Table \ref{tab1}.
Therefore, for each infinite binary information system, its extended
indicator vector is an extension of a row of Table \ref{tab1}: it can be
obtained from the row by adding the fourth digit, which is equal to $0$ or $1
$. Table \ref{tab5} contains all extensions of rows of Table \ref{tab1}.
We now show that the row with number 8 cannot be the
extended indicator vector of an infinite binary information system. Assume the
contrary: there is an infinite binary information system $U^{\prime }$ such that $%
eind(U^{\prime })=(1,1,0,0)$. Then $U^{\prime }\in \mathcal{R}$ and $U^{\prime }\notin \mathcal{I}$, but
this is impossible since, by Lemma \ref{P1a}, $\mathcal{R}\subseteq \mathcal{%
I}$. Therefore, for any infinite binary information system, its extended
indicator vector coincides with one of the rows of Table \ref{tab5} with
numbers 1-7. Thus, it coincides with one of the rows of Table \ref{tab3}.

\end{proof}

Let $d\in \mathbb{N}$. A $d$-\emph{complete tree over a binary information
system} $U=(A,F)$ is a marked finite directed tree with the root in which

\begin{itemize}
\item Each terminal node is not labeled.

\item Each nonterminal node is labeled with an attribute $f\in F$. There are
two edges leaving this node that are labeled with the systems of equations $%
\{f(x)=0\}$ and $\{f(x)=1\}$, respectively.

\item The length of each complete path (path from the root to a terminal
node) is equal to $d$.

\item For each complete path $\xi $, the equation system $\mathcal{S}(\xi )$%
, which is the union of equation systems assigned to the edges of the path $%
\xi $, is consistent.
\end{itemize}

Let $G$ be a $d$-complete tree over $U$. We denote by $F(G)$ the set of all attributes
attached to the nonterminal nodes of the tree $G$. The following statement was presented in \cite{DAM}.

\begin{lemma}
\label{L0b} {\rm \cite{DAM}} Let $U=(A,F)$ be a binary information system, $d\in
\mathbb{N}$, $G$ be a $d$-complete tree over $U$, and $z$ be a problem over $%
U$ such that $F(G)\subseteq F(z)$. Then $h_{U}^{(2)}(z)\geq d$.
\end{lemma}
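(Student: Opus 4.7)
The plan is to prove the lemma by induction on $d$, using an adversary argument that walks synchronously through $\Gamma$ and $G$. The adversary strategy is: maintain a current node $v$ in $G$ (starting at the root); when $\Gamma$ asks a hypothesis $H=\{f_1(x)=\delta_1,\ldots,f_n(x)=\delta_n\}$, look at the attribute $g$ labeling $v$ and respond with a counterexample that lets us descend in $G$. Since $F(G)\subseteq F(z)$, we have $g=f_j$ for some $j\in\{1,\ldots,n\}$, so both answers $\{g(x)=0\}$ and $\{g(x)=1\}$ are among the allowed counterexample labels of $H$.

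For the base case $d=1$, the existence of a $1$-complete tree over $U$ with attribute $g=f_j\in F(z)$ forces both tuples with $f_j=0$ and $f_j=1$ to appear in $\Delta_U(z)$, so no depth-$0$ tree (a single terminal node) can solve $z$; hence $h_U^{(2)}(z)\geq 1$. For the inductive step, suppose the claim holds for $d-1$ and let $\Gamma$ be any decision tree over $z$ using only hypotheses and solving $z$ relative to $U$. Its root is labeled with some hypothesis $H$ as above. Let $g=f_j$ label the root of $G$, and let $G'$ be the subtree of $G$ rooted at the child reached by the edge $\{g(x)=\neg\delta_j\}$. The key observation is that $G'$ is a $(d-1)$-complete tree over the information system $U'=(A(f_j,\neg\delta_j),F)$: its complete paths $\xi'$ have $\mathcal{S}(\xi')$ consistent on $A(f_j,\neg\delta_j)$ because $\{g(x)=\neg\delta_j\}\cup\mathcal{S}(\xi')$ is consistent on $A$ by the $d$-completeness of $G$. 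In particular $A(f_j,\neg\delta_j)\neq\emptyset$, so the counterexample $\{f_j(x)=\neg\delta_j\}$ is a legitimate answer to $H$ for some element $a\in A$.

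Next I would argue that the subtree $\Gamma'$ of $\Gamma$ hanging off the edge labeled $\{f_j(x)=\neg\delta_j\}$ is a decision tree over $z$ using only hypotheses that solves $z$ relative to $U'$: any $a\in A(f_j,\neg\delta_j)$ traverses $\Gamma$ through the root's counterexample edge and then through $\Gamma'$, so correctness of $\Gamma$ restricts to correctness of $\Gamma'$ on $U'$. Since $F(G')\subseteq F(G)\subseteq F(z)$ and $G'$ is a $(d-1)$-complete tree over $U'$, the inductive hypothesis yields $h(\Gamma')\geq d-1$, whence $h(\Gamma)\geq 1+h(\Gamma')\geq d$. As $\Gamma$ was arbitrary, $h_U^{(2)}(z)\geq d$.

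The main obstacle is verifying the two bookkeeping points in the inductive step cleanly: first, that the child subtree $G'$ really inherits $(d-1)$-completeness over the restricted information system $U'$ (which follows directly from the definition of $\mathcal{S}(\xi)$ and consistency); and second, that the adversary's chosen counterexample is always a \emph{valid} response to the hypothesis query, which relies on $A(f_j,\neg\delta_j)\neq\emptyset$ — a fact we get for free from the consistency of the complete paths in $G'$. Both are essentially definition-chasing, so once set up correctly the induction goes through without further surprises.
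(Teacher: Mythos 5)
Your proof is correct, and it is the standard adversary-style induction (descending synchronously through the complete tree $G$ and answering each hypothesis query with the counterexample $\{f_j(x)=\lnot\delta_j\}$ for the attribute $f_j$ at the current node of $G$) that the paper attributes to \cite{DAM} and to the methods of Littlestone, Maass--Tur\'an, and Angluin; the paper itself states the lemma without reproducing a proof. The only point left implicit, which is harmless, is that in the inductive step the root of $\Gamma$ must indeed be a working node, since $|\Delta_U(z)|\geq 2$ already for $d\geq 1$.
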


We now define seven infinite binary information systems $U_1, \ldots , U_7$ and prove that these systems belong to the complexity classes $\mathcal{V}_{1}, \ldots , \mathcal{V}_{7}$, respectively. Four of these systems were considered in \cite{DAM}. For the completeness, we repeat some reasonings from this paper.

Define an infinite binary information system $U_{1}=(A_{1},F_{1})$ as
follows: $A_{1}=\mathbb{N}$ and $F_{1}$ is the set of all functions from $%
\mathbb{N}$ to $\{0,1\}$.

\begin{lemma}
\label{L1b}The information system $U_{1}$ belongs to the class $\mathcal{V}_{1}$.
\end{lemma}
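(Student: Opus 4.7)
The class $\mathcal{V}_1$ is defined by the extended indicator vector $(0,0,0,0)$, so my goal is to verify $U_1 \notin \mathcal{R}$, $U_1 \notin \mathcal{D}$, $U_1 \notin \mathcal{C}$, and $U_1 \notin \mathcal{I}$. The two previously established inclusions do much of the work for me: by Lemma \ref{L1a} we have $\mathcal{C} \subseteq \mathcal{D}$, so $U_1 \notin \mathcal{D}$ will already give $U_1 \notin \mathcal{C}$; and by Lemma \ref{P1a} we have $\mathcal{R} \subseteq \mathcal{I}$, so $U_1 \notin \mathcal{I}$ will already give $U_1 \notin \mathcal{R}$. Thus it suffices to verify the two conditions $U_1 \notin \mathcal{D}$ and $U_1 \notin \mathcal{I}$.

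For $U_1 \notin \mathcal{D}$ I need to show the independence dimension $I(U_1)$ is infinite, i.e., for every $m \in \mathbb{N}$ there is an independent subset of $F_1$ of cardinality $m$. Since $F_1$ consists of \emph{all} functions $\mathbb{N} \to \{0,1\}$, I can fix any injection from $\{0,1\}^m$ into $\mathbb{N}$ and, for $i=1,\ldots,m$, define $f_i \in F_1$ by letting $f_i(a)$ equal the $i$-th coordinate of $a$'s preimage under that injection (and setting $f_i$ to $0$ off the image). Then for every $(\delta_1,\ldots,\delta_m)\in\{0,1\}^m$ the equation system $\{f_1(x)=\delta_1,\ldots,f_m(x)=\delta_m\}$ is satisfied by the element coding $(\delta_1,\ldots,\delta_m)$, so $\{f_1,\ldots,f_m\}$ is an independent subset of $F_1$ of size $m$. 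Hence $I(U_1)=\infty$ and $U_1\notin\mathcal{D}$.

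For $U_1 \notin \mathcal{I}$, I need to show $U_1$ is not $r$-i-reduced for any $r\in\mathbb{N}$. Given such an $r$, I will construct an inconsistent system of $r+1$ equations over $U_1$ whose every proper subsystem is consistent, which forces every subsystem of size at most $r$ to be consistent. The construction is the usual minimal inconsistent example: pick distinct $a_1,\ldots,a_{r+1}\in\mathbb{N}$, and for each $i=1,\ldots,r+1$ define $f_i\in F_1$ by $f_i(a_i)=1$ and $f_i(a)=0$ for all $a\ne a_i$ with $a \in \{a_1,\dots,a_{r+1}\}$, and $f_i(a)=1$ for every other $a \in \mathbb{N}$. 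Then the system $\{f_1(x)=0,\ldots,f_{r+1}(x)=0\}$ has no solution (no element makes all $f_i$ vanish), while for each $i$ the element $a_i$ satisfies $f_j(a_i)=0$ for all $j\ne i$, so every $r$-equation subsystem is consistent. This shows $U_1\notin\mathcal{I}$.

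Combining: $U_1\notin\mathcal{D}$ gives $c_2=c_3=0$, and $U_1\notin\mathcal{I}$ gives $c_4=0$ and (via Lemma \ref{P1a}) $c_1=0$, so $eind(U_1)=(0,0,0,0)$ and $U_1\in\mathcal{V}_1$. No step is really an obstacle here; the only subtlety is remembering that the $r$-i-reduced condition only controls subsystems of size at most $r$, so I need an inconsistent system of size $r+1$ (not $r$), and to confirm that consistency of all $r$-element subsystems automatically implies consistency of all smaller subsystems since consistency is preserved under removing equations.
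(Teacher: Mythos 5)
Your proof is correct and follows essentially the same route as the paper: establish $U_1\notin\mathcal{D}$ via infinite independence dimension and $U_1\notin\mathcal{I}$ via a minimal inconsistent system of more than $r$ equations, then let the inclusions $\mathcal{C}\subseteq\mathcal{D}$ and $\mathcal{R}\subseteq\mathcal{I}$ (packaged in the paper as Lemma \ref{P2a}) force the remaining coordinates of the extended indicator vector to zero. The paper's minimal inconsistent witness is $\{f_0(x)=1,f_1(x)=0,\ldots,f_n(x)=0\}$ with $f_0$ the indicator of $\{1,\ldots,n\}$ and $f_i$ the indicator of $\{i\}$, while yours uses complemented indicators with all right-hand sides equal to $0$; both are valid, and you also spell out the independence construction that the paper leaves as ``easy to show.''
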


\begin{proof}
It is easy to show that the information system $U_{1}$ has infinite $I$%
-dimension. Therefore $U_{1}\notin \mathcal{D}$. We now show that $%
U_{1}\notin \mathcal{I}$. For a natural $n$, we define functions $%
f_{0},f_{1},\ldots ,f_{n}\in F_{1}$. For any $a\in \mathbb{N}$, $f_{0}(a)=1$
if and only if $a\in \{1,\ldots ,n\}$. For $i=1,\ldots ,n$, $f_{i}(a)=1$ if
and only if $a=i$. It is easy to show that the equation system $%
\{f_{0}(x)=1,f_{1}(x)=0,\ldots ,f_{n}(x)=0\}$ is inconsistent but each
proper subsystem of this system if consistent. Therefore $U_{1}\notin
\mathcal{I}$. Using Lemma \ref{P2a}, we obtain $eind(U_{1})=(0,0,0,0)$, i.e.,
$U_{1}\in \mathcal{V}_{1}$.
\end{proof}

Define an infinite binary information system $U_{2}=(A_{2},F_{2})$ as
follows: $A_{2}$ is the set of all infinite sequences of the kind $%
a_{1},a_{2},\ldots $, where $a_{j}\in \{0,1\}$, $j\in \mathbb{N}$, $%
F_{2}=\{f_{i}:i\in \mathbb{N}\}$ and $f_{i}(a_{1},a_{2},\ldots )=a_{i}$.

\begin{lemma}
\label{L2b}The information system $U_{2}$ belongs to the class $\mathcal{V}%
_{2}$.
\end{lemma}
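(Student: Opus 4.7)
The plan is to compute the four components of $eind(U_2)$ and match them to row 2 of Table \ref{tab3}, namely $(0,0,0,1)$. Concretely, I would verify in turn that $U_2 \notin \mathcal{D}$, $U_2 \notin \mathcal{C}$, $U_2 \notin \mathcal{R}$, and $U_2 \in \mathcal{I}$, and then invoke Lemma \ref{P2a} to conclude that $eind(U_2)$ must be the unique row of Table \ref{tab3} with those coordinates, i.e., $U_2 \in \mathcal{V}_2$.

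For $U_2 \notin \mathcal{D}$ I would observe that the set $\{f_1,\ldots,f_m\} \subseteq F_2$ is independent for every $m \in \mathbb{N}$: given any $\delta_1,\ldots,\delta_m \in \{0,1\}$, the sequence $a = (\delta_1,\ldots,\delta_m,0,0,\ldots)$ belongs to $A_2$ and satisfies $f_i(a)=\delta_i$ for $i=1,\ldots,m$. Thus $I(U_2)=\infty$, so $U_2\notin \mathcal{D}$. Since Lemma \ref{L1a} gives $\mathcal{C}\subseteq \mathcal{D}$, this immediately yields $U_2 \notin \mathcal{C}$.

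For $U_2 \notin \mathcal{R}$ I would show that for every $r \in \mathbb{N}$ the consistent system $S_r=\{f_1(x)=0,\ldots,f_{r+1}(x)=0\}$ has no subsystem of $\le r$ equations with the same solution set: removing the equation $f_i(x)=0$ enlarges the solution set by admitting any sequence whose $i$-th coordinate is $1$ (and whose remaining coordinates satisfy the surviving equations), which exists in $A_2$. For $U_2 \in \mathcal{I}$ I would observe that a system $\{g_1(x)=\delta_1,\ldots,g_m(x)=\delta_m\}$ with $g_j \in F_2$ is inconsistent on $A_2$ if and only if there are indices $j \ne k$ with $g_j = g_k = f_i$ for some $i$ but $\delta_j \ne \delta_k$; the two-equation subsystem $\{g_j(x)=\delta_j, g_k(x)=\delta_k\}$ is then also inconsistent. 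Hence $U_2$ is $2$-i-reduced and belongs to $\mathcal{I}$. Combining all four facts with Lemma \ref{P2a} gives $eind(U_2)=(0,0,0,1)$, so $U_2 \in \mathcal{V}_2$.

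There is no real obstacle in this argument; the mildly delicate point is the $U_2 \notin \mathcal{R}$ verification, where one must be careful that strict enlargement of the solution set really occurs when any single equation is dropped, but the explicit witness above handles it immediately.
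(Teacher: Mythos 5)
Your proof is correct and follows essentially the same route as the paper: verify the relevant components of $eind(U_2)$ (the paper checks only $U_2\notin\mathcal{D}$ via infinite $I$-dimension and $U_2\in\mathcal{I}$ via the $2$-i-reduced characterization of inconsistency, exactly as you do) and then apply Lemma \ref{P2a}. Your additional direct verifications of $U_2\notin\mathcal{C}$ and $U_2\notin\mathcal{R}$ are correct but redundant, since row 2 is the unique row of Table \ref{tab3} with $c_2=0$ and $c_4=1$, so Lemma \ref{P2a} already forces $c_1=c_3=0$.
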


\begin{proof}
It is easy to show that the information system $U_{2}$ has infinite $I$%
-dimension. Therefore $U_{2}\notin \mathcal{D}$. Let $S$ be a system of
equations over $U_{2}$. It is clear that $S$ is inconsistent if and only if,
for some $i\in \mathbb{N}$, the system $S$ contains equations $f_{i}(x)=0$
and $f_{i}(x)=1$. Therefore $U_{2}$ is $2$-i-restricted and $U_{2}\in \mathcal{I}$. Using Lemma \ref{P2a},
we obtain $eind(U_{2})=(0,0,0,1)$, i.e., $U_{2}\in \mathcal{V}_{2}$.
\end{proof}

For any $i\in \mathbb{N}$, we define two functions $p_{i}:\mathbb{N}%
\rightarrow \{0,1\}$ and $l_{i}:\mathbb{N}\rightarrow \{0,1\}$. Let $j\in
\mathbb{N}$. Then $p_{i}(j)=1$ if and only if $j=i$, and $l_{i}(j)=1$ if and
only if $j>i$.

Define an infinite binary information system $U_{3}=(A_{3},F_{3})$ as
follows: $A_{3}=\mathbb{N}$ and $F_{3}=\{p_{i}:i\in \mathbb{N}\}\cup
\{l_{i}:i\in \mathbb{N}\}$.

\begin{lemma}
\label{L3b}The information system $U_{3}$ belongs to the class $\mathcal{V}%
_{3}$.
\end{lemma}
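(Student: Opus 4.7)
The plan is to invoke Lemma \ref{P2a}: it suffices to compute the extended indicator vector of $U_3$ and check that it equals $(0,1,0,0)$, i.e., the row of Table \ref{tab3} that defines the class $\mathcal{V}_3$. So I will verify, in four short arguments, that $U_3\notin \mathcal{R}$, $U_3\in \mathcal{D}$, $U_3\notin \mathcal{C}$, and $U_3\notin \mathcal{I}$.

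For $U_3\in \mathcal{D}$, I will show $I(U_3)=1$ by enumerating the four types of pairs of distinct attributes from $F_3$ and exhibiting, in each case, a forbidden combination: $\{p_i,p_j\}$ with $i\neq j$ rules out $(1,1)$; $\{l_i,l_j\}$ with $i<j$ rules out $(0,1)$; $\{p_i,l_j\}$ rules out $(1,1)$ when $i\leq j$ and $(1,0)$ when $i>j$. Hence no two-element subset of $F_3$ is independent, while any singleton obviously is.

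For $U_3\notin \mathcal{R}$ and $U_3\notin \mathcal{I}$, I will give explicit witnesses. For $\mathcal{R}$: the consistent system $S_n=\{p_1(x)=0,\ldots ,p_n(x)=0\}$ has solution set $\{n+1,n+2,\ldots \}$, whereas dropping any equation $p_i(x)=0$ enlarges the solution set by the point $i$, so no subsystem of $S_n$ with fewer than $n$ equations has the same solution set; as $n$ is arbitrary, $U_3$ is not $r$-reduced for any $r$. For $\mathcal{I}$: the system $S_n'=S_n\cup \{l_n(x)=0\}$ is inconsistent (it forces $x\in \{1,\ldots ,n\}\setminus \{1,\ldots ,n\}=\emptyset $), yet removing $l_n(x)=0$ leaves the consistent system $S_n$ and removing any $p_i(x)=0$ leaves a system whose only solution is $x=i$; so every proper subsystem of $S_n'$ is consistent and $U_3$ is not $r$-i-reduced for any $r$.

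For $U_3\notin \mathcal{C}$, which I expect to be the main obstacle, I will route through Theorem \ref{T2a}. I will build, for each $d\geq 1$, a $d$-complete tree $G_d$ over $U_3$ that simulates binary search for $x$ in $\{1,\ldots ,2^d\}$: the root is labeled with $l_{2^{d-1}}$, and in each subtree corresponding to a remaining interval $\{a+1,\ldots ,b\}$ the node is labeled with $l_{(a+b)/2}$. Every complete path has length $d$ and its associated equation system narrows the solutions to a nonempty singleton in $\{1,\ldots ,2^d\}$, so the system is consistent; thus $G_d$ is indeed a $d$-complete tree using $2^d-1$ distinct attributes $l_i$. Taking a problem $z_d$ over $U_3$ that contains the $2^d-1$ attributes of $F(G_d)$, Lemma \ref{L0b} gives $h_{U_3}^{(2)}(z_d)\geq d$, whence $h_{U_3}^{(2)}(n)=\Omega (\log n)$, which is not $O(1)$. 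Since $U_3\in \mathcal{D}$, Theorem \ref{T2a}(a) rules out $U_3\in \mathcal{C}$, so $U_3\in \mathcal{D}\setminus \mathcal{C}$. Combining the four verified digits with Lemma \ref{P2a} yields $eind(U_3)=(0,1,0,0)$, the third row of Table \ref{tab3}, and hence $U_3\in \mathcal{V}_3$.
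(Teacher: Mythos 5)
Your proof is correct and follows essentially the same route as the paper: you verify all four digits of the extended indicator vector, using a $d$-complete tree built from the attributes $l_{i}$ together with Lemma \ref{L0b} and Theorem \ref{T2a} to exclude $\mathcal{C}$, the inconsistent system $\{p_{1}(x)=0,\ldots ,p_{n}(x)=0,l_{n}(x)=0\}$ to exclude $\mathcal{I}$, and $I(U_{3})=1$ to get $U_{3}\in \mathcal{D}$. The only cosmetic differences are that you exclude $\mathcal{R}$ by a direct witness where the paper deduces it from $U_{3}\notin \mathcal{I}$ via Lemma \ref{P1a}, and that on the all-ones path of your binary-search tree the solution set is an infinite tail rather than a singleton --- but it is still nonempty, which is all the definition of a $d$-complete tree requires.
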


\begin{proof}
For $n\in \mathbb{N}$, denote $S_{n}=\{p_{1}(x)=0,\ldots
,p_{n}(x)=0,l_{n}(x)=0\}$. It is easy to show that the equation system $S_{n}$ is
inconsistent and each proper subsystem of $S_{n}$ is consistent. Therefore $%
U_{3}\notin \mathcal{I}$. By Lemma \ref{P1a},  $U_{3}\notin
\mathcal{R}$. Using attributes from the set $\{l_{i}:i\in \mathbb{N}\}$, we
can construct $d$-complete tree over $U_{3}$ for each $d\in \mathbb{N}$. By
Lemma \ref{L0b} and Theorem \ref{T2a}, $U_{3}\notin \mathcal{C}$. One can
show that $I(U_{3})=1$. Therefore $U_{3}\in \mathcal{D}$. Thus, $%
eind(U_{3})=(0,1,0,0)$, i.e., $U_{3}\in \mathcal{V}_{3}$.
\end{proof}

Define an infinite binary information system $U_{4}=(A_{4},F_{4})$ as
follows: $A_{4}=\mathbb{N}^{2}$ and $F_{4}=\{f_{i}:i\in \mathbb{N}\}\cup
\{f_{i,j}:i,j\in \mathbb{N}\}$. For any $a=(p,q)\in A_{4}$ and any $f_{i}\in
F_{4}$, $f_{i}(a)=1$ if and only if $p>i$. For any $a\in A_{4}$ and any $%
f_{i,j}\in F_{4}$, $f_{i,j}(a)=1$ if and only if $a=(i,j)$.

\begin{lemma}
\label{L4b}The information system $U_{4}$ belongs to the class $\mathcal{V}%
_{4}$.
\end{lemma}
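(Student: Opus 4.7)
The plan is to verify $eind(U_4)=(0,1,0,1)$, i.e., $U_4\notin\mathcal{R}$, $U_4\in\mathcal{D}$, $U_4\notin\mathcal{C}$, and $U_4\in\mathcal{I}$; Lemma \ref{P2a} then identifies $U_4$ as a member of $\mathcal{V}_4$ (row 4 of Table \ref{tab3}).

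For $U_4\in\mathcal{D}$, I would check that no two distinct attributes of $F_4$ are independent, so that $I(U_4)=1$: two linear attributes $f_i,f_j$ with $i<j$ cannot realize the pattern $(0,1)$; two point indicators $f_{i,j},f_{i',j'}$ with $(i,j)\ne(i',j')$ cannot realize $(1,1)$; and for a mixed pair $\{f_i,f_{i',j}\}$, the pattern $(0,1)$ forces $i'\le i$ while $(1,1)$ forces $i'>i$. For $U_4\notin\mathcal{R}$, I would exhibit for each $n$ the consistent system $S_n=\{f_1(x)=0,f_{1,1}(x)=0,\ldots,f_{1,n}(x)=0\}$ whose solution set is $\{(1,q):q>n\}$: removing $f_1(x)=0$ admits $p\ge 2$, and removing any $f_{1,k}(x)=0$ restores the point $(1,k)$, so no strict subsystem has the same solution set.

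For $U_4\in\mathcal{I}$, I would prove that $U_4$ is $2$-i-reduced. Given an inconsistent finite system $S$ over $U_4$, assume no attribute appears with both values in $S$ (otherwise a size-$2$ inconsistent subsystem is immediate). If some equation $f_{i_0,j_0}(x)=1$ occurs in $S$, then $x=(i_0,j_0)$ is the only candidate solution, so some other single equation of $S$ must exclude this point and combine with $f_{i_0,j_0}(x)=1$ into an inconsistent pair. Otherwise every $f_{i,j}$-equation in $S$ has the form $f_{i,j}(x)=0$, excluding just one point; if additionally no two $f_i$-equations are in conflict, the set of allowed first coordinates is nonempty, and above each allowed $p$ only finitely many values of $q$ are excluded, producing a solution and contradicting inconsistency. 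Hence some two $f_i$-equations already form an inconsistent pair.

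For $U_4\notin\mathcal{C}$, I would construct $d$-complete trees over $U_4$ of arbitrary depth $d$ using only the linear attributes, arranged as a balanced binary search on the first coordinate: the root queries $f_{2^{d-1}}$, and each subsequent node recursively splits its current $p$-interval at the midpoint via the appropriate $f_i$. Every complete path then describes a non-empty interval of integer values of $p$ (a single integer or an unbounded tail), so the associated equation system is consistent. Choosing $z_d=(f_1,\ldots,f_N)$ with $N$ large enough to contain all attributes used in the tree, Lemma \ref{L0b} yields $h_{U_4}^{(2)}(z_d)\ge d$; thus $h_{U_4}^{(2)}(n)$ is unbounded, which by Theorem \ref{T2a}(a) excludes $U_4\in\mathcal{C}$. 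The main obstacle will be the 2-i-reducedness case analysis, where one must exploit both the interval character of the $f_i$-constraints and the inability of finitely many point exclusions to exhaust any $q$-fiber of $\mathbb{N}^2$.
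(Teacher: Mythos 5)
Your proposal is correct and follows essentially the same route as the paper: verify $eind(U_4)=(0,1,0,1)$ componentwise (point indicators $f_{1,j}$ witnessing $U_4\notin\mathcal{R}$, $I(U_4)=1$ for $U_4\in\mathcal{D}$, $d$-complete trees built from the linear attributes $f_i$ plus Lemma \ref{L0b} and Theorem \ref{T2a} for $U_4\notin\mathcal{C}$, and $2$-i-reducedness for $U_4\in\mathcal{I}$), then invoke Lemma \ref{P2a}. Your only deviations are cosmetic: you add $f_1(x)=0$ to the witness system $S_n$ and replace the paper's explicit enumeration of minimal inconsistent pairs by a short case analysis, both of which are sound.
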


\begin{proof}
Let $n\in \mathbb{N}$ and $S_{n}=\{f_{1,1}(x)=0,\ldots ,f_{1,n}(x)=0\}$. It
is easy to show that the system $S_{n}$ is consistent and each proper
subsystem of $S_{n}$ has another set of solutions on $A_{4}$ than the system
$S_{n}$. Therefore $U_{4}\notin \mathcal{R}$.

Using attributes from the set $\{f_{i}:i\in \mathbb{N}\}$, we can construct $%
d$-complete tree over $U_{4}$ for each $d\in \mathbb{N}$. By Lemma \ref{L0b}
and Theorem \ref{T2a}, $U_{4}\notin \mathcal{C}$.

Let $S$ be an equation system over $U_{4}$. One can show that $S$ is
inconsistent if and only if $S$ contains at least one of the following pairs
of equations:

\begin{itemize}
\item $f_{i,j}(x)=0$ and $f_{i,j}(x)=1$;

\item $f_{i,j}(x)=1$ and $f_{k,l}(x)=1$, $(i,j)\neq (k,l)$;

\item $f_{i,j}(x)=1$ and $f_{k}(x)=0$, $i>k$;

\item $f_{i,j}(x)=1$ and $f_{k}(x)=1$, $i\leq k$;

\item $f_{i}(x)=0$ and $f_{j}(x)=1$, $i\leq j$.
\end{itemize}
Therefore $U_{4}$ is $2$-i-restricted and $U_{4}\in \mathcal{I}$. One can
show that $I(U_{4})=1$. Therefore $U_{4}\in \mathcal{D}$. Thus, $%
eind(U_{4})=(0,1,0,1)$, i.e., $U_{4}\in \mathcal{V}_{4}$.
\end{proof}

Define an infinite binary information system $U_{5}=(A_{5},F_{5})$ as
follows: $A_{5}=\bigcup_{i\in \mathbb{N}}\{(i,1),\ldots ,(i,i)\}$ and $%
F_{5}=\bigcup_{i\in \mathbb{N}}\{f_{i},f_{i,1},\ldots ,f_{i,i}\}$. For any $%
a\in A_{5}$ and any $f_{i}\in F_{5}$, $f_{i}(a)=1$ if and only if $a\in
\{(i,1),\ldots ,(i,i)\}$. For any $a\in A_{5}$ and any $f_{i,j}\in F_{5}$, $%
f_{i,j}(a)=1$ if and only if $a=(i,j)$.

\begin{lemma}
\label{L5b}The information system $U_{5}$ belongs to the class $\mathcal{V}%
_{5}$.
\end{lemma}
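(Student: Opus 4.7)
The plan is to determine the extended indicator vector $eind(U_5)$ by checking each of the four conditions. Since $\mathcal{C}\subseteq\mathcal{D}$ by Lemma \ref{L1a} and $\mathcal{R}\subseteq\mathcal{I}$ by Lemma \ref{P1a}, it suffices to establish $U_5\in\mathcal{C}$ and $U_5\notin\mathcal{I}$; the implications then give $U_5\in\mathcal{D}$ and $U_5\notin\mathcal{R}$, so $eind(U_5)=(0,1,1,0)$ and hence $U_5\in\mathcal{V}_5$.

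To show $U_5\notin\mathcal{I}$, I would fix an arbitrary $n\in\mathbb{N}$ and consider the system
\[
S_n=\{f_n(x)=1,\,f_{n,1}(x)=0,\ldots,f_{n,n}(x)=0\}.
\]
This system is inconsistent on $A_5$ because $f_n(x)=1$ forces $x\in\{(n,1),\ldots,(n,n)\}$ while the remaining $n$ equations exclude each of these candidates in turn. Every proper subsystem is consistent: dropping $f_n(x)=1$ leaves a system satisfied by any $(m,\ell)$ with $m\neq n$, and dropping any single $f_{n,k}(x)=0$ leaves a system satisfied by $(n,k)$. Thus no finite $r$ can witness the $r$-i-reduced property, and $U_5\notin\mathcal{I}$.

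The main step is $U_5\in\mathcal{C}$, and I would prove $U_5$ is a $2$-information system by unwinding the inductive definition using the choice $\delta=1$ at every attribute. For each $f_{i,j}\in F_5$, the restriction $A_5(f_{i,j},1)=\{(i,j)\}$ is a singleton, so $(A_5(f_{i,j},1),F_5)$ is a $0$-information system. For each $f_i\in F_5$, the set $A_5(f_i,1)=\{(i,1),\ldots,(i,i)\}$ is finite; every attribute outside $\{f_{i,1},\ldots,f_{i,i}\}$ is constant on this set (and is handled by picking the $\delta$ that empties the set, which vacuously gives a $0$-information system), while each non-constant $f_{i,j}$ with $1\le j\le i$ satisfies $A_5(f_i,1)\cap\{x:f_{i,j}(x)=1\}=\{(i,j)\}$, again $0$-information. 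Hence $(A_5(f_i,1),F_5)$ is a $1$-information system, and so $U_5$ itself is a $2$-information system, giving $U_5\in\mathcal{C}$.

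The chief obstacle is the bookkeeping for the recursive definition of $k$-information system: one must simultaneously verify that constant attributes on the intermediate set $(A_5(f_i,1),F_5)$ are handled correctly via the empty-restriction trick, and that no attribute in $F_5$ escapes the two-case split between the $f_i$'s and the $f_{i,j}$'s. Once the uniform choice $\delta=1$ is fixed, the remaining verifications (consistency of proper subsystems of $S_n$, constancy of off-block attributes on $\{(i,1),\ldots,(i,i)\}$) are routine.
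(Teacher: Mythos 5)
Your proof is correct and follows essentially the same route as the paper: both establish $U_5\in\mathcal{C}$ by exhibiting it as a $2$-information system via the restrictions $A_5(f_{i,j},1)$ and $A_5(f_i,1)$, both establish $U_5\notin\mathcal{I}$ via the minimally inconsistent systems $S_n=\{f_n(x)=1,f_{n,1}(x)=0,\ldots,f_{n,n}(x)=0\}$, and both then read off the remaining coordinates of the extended indicator vector (the paper packages this last step as an appeal to Lemma \ref{P2a}, while you invoke the inclusions $\mathcal{C}\subseteq\mathcal{D}$ and $\mathcal{R}\subseteq\mathcal{I}$ directly, which is the same content). The only nitpick is that for $i=1$ the set $A_5(f_1,1)$ is a singleton, so $(A_5(f_1,1),F_5)$ is a $0$-information system rather than a $1$-information system as you assert, but this does not affect the argument since the definition of a $2$-information system only requires some $m\in\{0,1\}$ for each attribute.
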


\begin{proof}
It is easy to show that $U_{5}$ is $2$-information system. In particular, $%
(A_{5}(f_{i},1),F_{5})$ is $0$-information system if $i=1$, $%
(A_{5}(f_{i},1),F_{5})$ is $1$-information system if $i>1$, and $%
(A_{5}(f_{i,j},1),F_{5})$ is $0$-information system for any attribute $%
f_{i,j}\in F_{5}$. Therefore $U_{5}\in \mathcal{C}$. Let $i\in \mathbb{N}$
and $S_{i}=\{f_{i}(x)=1,f_{i,1}(x)=0,\ldots ,f_{i,i}(x)=0\}$. One can show
that $S_{i}$ is inconsistent and each proper subsystem of $S_{i}$ is
consistent. Therefore $U_{5}\notin \mathcal{I}$. Using Lemma \ref{P2a}, we
obtain $eind(U_{3})=(0,1,1,0)$, i.e., $U_{5}\in \mathcal{V}_{5}$.
\end{proof}

Define an infinite binary information system $U_{6}=(A_{6},F_{6})$ as
follows: $A_{6}=\mathbb{N}$ and $F_{6}=\{p_{i}:i\in \mathbb{N}\}$.

\begin{lemma}
\label{L6b}The information system $U_{6}$ belongs to the class $\mathcal{V}%
_{6}$.
\end{lemma}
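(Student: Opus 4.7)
The plan is to compute the extended indicator vector $eind(U_{6})$ by checking membership of $U_{6}$ in each of the four sets $\mathcal{C}$, $\mathcal{D}$, $\mathcal{I}$, $\mathcal{R}$ independently, and then to invoke Lemma~\ref{P2a} to identify the corresponding row of Table~\ref{tab3}.

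First I would show $U_{6}\in\mathcal{C}$ by verifying that $U_{6}$ is a $1$-information system. The key observation is that $A_{6}(p_{i},1)=\{i\}$ is a singleton for every $i\in\mathbb{N}$, so every function in $F_{6}$ is constant on this set, which makes $(A_{6}(p_{i},1),F_{6})$ a $0$-information system. Since $U_{6}$ itself is not $0$-information system (the attribute $p_{i}$ is nonconstant on $\mathbb{N}$), choosing $\delta=1$ and $m=0$ for every $f=p_{i}$ satisfies the inductive clause, so $U_{6}$ is $1$-information system. Membership $U_{6}\in\mathcal{D}$ then follows from Lemma~\ref{L1a}.

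Next I would show $U_{6}\in\mathcal{I}$ by a brief case analysis. Any equation system over $U_{6}$ has the form $\{p_{i_{1}}(x)=\delta_{1},\ldots,p_{i_{m}}(x)=\delta_{m}\}$, and the only way it can fail to be consistent on $\mathbb{N}$ is to contain either two equations $p_{i}(x)=0$ and $p_{i}(x)=1$ for some common index $i$, or two equations $p_{i}(x)=1$ and $p_{j}(x)=1$ with $i\neq j$. In either case a two-equation inconsistent subsystem exists, so $U_{6}$ is $2$-i-reduced.

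Finally I would show $U_{6}\notin\mathcal{R}$ by exhibiting, for each $n\in\mathbb{N}$, the consistent system $S_{n}=\{p_{1}(x)=0,\ldots,p_{n}(x)=0\}$ whose solution set on $\mathbb{N}$ equals $\mathbb{N}\setminus\{1,\ldots,n\}$; deleting any single equation $p_{i}(x)=0$ reintroduces $i$ into the solution set, so $S_{n}$ has no proper subsystem with the same solution set, and $U_{6}$ is not $r$-reduced for any fixed $r$. Combining these four facts yields $eind(U_{6})=(0,1,1,1)$, and Lemma~\ref{P2a} identifies this with row~$6$ of Table~\ref{tab3}, giving $U_{6}\in\mathcal{V}_{6}$. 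I do not expect a genuine obstacle here; the only step that requires a moment of thought is the observation that $A_{6}(p_{i},1)$ is a singleton, which collapses the $k$-information-system induction already at $k=1$.
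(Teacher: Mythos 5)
Your proof is correct and takes essentially the same route as the paper: both establish that $U_{6}$ is a $1$-information system (hence in $\mathcal{C}$) and $2$-i-reduced (hence in $\mathcal{I}$), then invoke Lemma~\ref{P2a}. The only difference is that you additionally verify $U_{6}\in\mathcal{D}$ and $U_{6}\notin\mathcal{R}$ directly, whereas the paper gets these for free because row~6 is the unique row of Table~\ref{tab3} with $c_{3}=c_{4}=1$; your extra checks are sound but redundant.
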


\begin{proof}
It is easy to show that $U_{6}$ is $1$-information system: evidently, $U_{6}$ is not $0$-information system, and $%
(A_{6}(p_{i},1),F_{6})$ is $0$-information system  for any $i\in \mathbb{N}$. Therefore $%
U_{6}\in \mathcal{C}$. Let $S$ be an equation system over $U_{6}$. One can
show that $S$ is inconsistent if and only if it contains equations $p_{i}(x)=0$ and $%
p_{i}(x)=1$ for some $i\in \mathbb{N}$ or it contains equations $p_{i}(x)=1$
and $p_{j}(x)=1$ for some $i,j\in \mathbb{N}$, $i\neq j$. Therefore $U_{6}$ is $2$-i-restricted and  $%
U_{6}\in \mathcal{I}$. Using Lemma \ref{P2a}, we obtain $%
eind(U_{6})=(0,1,1,1)$, i.e., $U_{6}\in \mathcal{V}_{6}$.
\end{proof}

Define an infinite binary information system $U_{7}=(A_{7},F_{7})$ as
follows: $A_{7}=\mathbb{N}$ and $F_{7}=\{l_{i}:i\in \mathbb{N}\}$.

\begin{lemma}
\label{L7b}The information system $U_{7}$ belongs to the class $\mathcal{V}%
_{7}$.
\end{lemma}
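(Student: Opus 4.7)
The plan is to verify that the extended indicator vector of $U_7$ is $(1,1,0,1)$, i.e., $U_7\in\mathcal R\cap\mathcal D\cap\mathcal I$ and $U_7\notin\mathcal C$, and then invoke Lemma \ref{P2a} to conclude $U_7\in\mathcal V_7$. The key structural fact I will exploit is that each attribute $l_i$ is a monotone step function on $\mathbb N$: the equation $l_i(x)=1$ is equivalent to ``$x>i$'' and $l_i(x)=0$ is equivalent to ``$x\le i$''.

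First I would show $U_7\in\mathcal R$ (and hence $U_7\in\mathcal I$ via Lemma \ref{P1a}). Any system of equations $S$ over $U_7$ is of the form
\[
\{l_{i_1}(x)=1,\ldots ,l_{i_p}(x)=1,\; l_{j_1}(x)=0,\ldots ,l_{j_q}(x)=0\},
\]
which is equivalent to the single conjunction $\max_{s}i_s<x\le\min_{t}j_t$ (with the conventions $\max\emptyset=0$ and $\min\emptyset=\infty$). Hence if $S$ is consistent on $A_7$, then the subsystem consisting of the equation with the largest right-hand-side threshold among the $l_{i_s}(x)=1$'s and the equation with the smallest threshold among the $l_{j_t}(x)=0$'s has the same solution set as $S$ and contains at most two equations. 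So $U_7$ is $2$-reduced, giving $U_7\in\mathcal R$, and then $U_7\in\mathcal I$ by Lemma \ref{P1a}.

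Next I would show $U_7\in\mathcal D$ by proving $I(U_7)=1$. Indeed, for any two distinct indices $i<j$ the system $\{l_i(x)=0,\,l_j(x)=1\}$ requires both $x\le i$ and $x>j$, which is impossible. Thus no two-element subset of $F_7$ is independent, so $I(U_7)\le 1<\infty$ and $U_7\in\mathcal D$.

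The main point is to show $U_7\notin\mathcal C$. I would do this by exhibiting, for every $d\in\mathbb N$, a $d$-complete tree $G$ over $U_7$; then Lemma \ref{L0b} produces problems $z$ over $U_7$ with $h_{U_7}^{(2)}(z)\ge d$, i.e., $h_{U_7}^{(2)}(n)$ is unbounded, which by Theorem \ref{T2a}(a) precludes $U_7\in\mathcal C$. The construction is a standard binary-search tree: recursively, on the interval $[a+1,b]$ with $b-a=2^d$ one labels the root with $l_m$ where $m=a+2^{d-1}$, so the left edge ($l_m(x)=0$) cuts to the interval $[a+1,m]$ and the right edge ($l_m(x)=1$) cuts to $[m+1,b]$. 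Each complete path of length $d$ thus encodes a nonempty singleton range and so its equation system is consistent on $A_7=\mathbb N$. This yields a $d$-complete tree for every $d$, and the obstruction $U_7\notin\mathcal C$ follows. Combining the four memberships gives $eind(U_7)=(1,1,0,1)$, which is row $7$ of Table \ref{tab3}, so $U_7\in\mathcal V_7$. The only delicate step is arranging the $d$-complete tree, but the binary-search layout makes each path self-consistent by construction.
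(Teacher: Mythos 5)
Your proof is correct, but it does more work than the paper's. The paper's proof establishes only the single membership $U_{7}\in \mathcal{R}$ (via the same two-equation reduction $\{l_{i_{0}}(x)=1,\,l_{j_{0}}(x)=0\}$ with $i_{0}=\max\{i:l_{i}(x)=1\in S\}$ and $j_{0}=\min\{j:l_{j}(x)=0\in S\}$ that you use), and then invokes Lemma \ref{P2a}: since row $7$ of Table \ref{tab3} is the \emph{only} admissible extended indicator vector whose first coordinate equals $1$, knowing $U_{7}\in\mathcal{R}$ already forces $eind(U_{7})=(1,1,0,1)$. You instead verify all four coordinates directly: $\mathcal{I}$ via Lemma \ref{P1a}, $\mathcal{D}$ via $I(U_{7})=1$, and $U_{7}\notin\mathcal{C}$ via a binary-search $d$-complete tree combined with Lemma \ref{L0b} and Theorem \ref{T2a}(a) --- the same device the paper uses for $U_{3}$ and $U_{4}$. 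Each of these verifications is sound (in particular, your $d$-complete tree on an interval of length $2^{d}$ does produce consistent equation systems on every complete path). What the paper's shortcut buys is brevity; what your version buys is independence from the uniqueness observation about Table \ref{tab3} and an explicit confirmation of each coordinate, which makes the claim $eind(U_{7})=(1,1,0,1)$ self-contained rather than inherited from the classification in Lemma \ref{P2a}.
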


\begin{proof}
Let us consider an arbitrary consistent system of equations $S$ over $U_{7}$%
. We now show that there is a subsystem of $S$, which has at most two
equations and the same set of solutions as $S$. Let $S$ contain both
equations of the kind $l_{i}(x)=1$ and $l_{j}(x)=0$. Denote $i_{0}=\max
\{i:l_{i}(x)=1\in S\}$ and $j_{0}=\min \{j:l_{j}(x)=0\in S\}$. One can show
that the system of equations $S^{\prime }=\{l_{i_{0}}(x)=1,l_{j_{0}}(x)=0\}$
has the same set of solutions as $S$. The case when $S$ contains for some $%
\delta \in \{0,1\}$ only equations of the kind $l_{p}(x)=\delta $ can be
considered in a similar way. In this case, the equation system $S^{\prime }$
contains only one equation. Therefore the information system $U_{7}$ is $2$%
-reduced and $U_{7}\in $ $\mathcal{R}$. Using Lemma \ref{P2a}, we obtain $%
eind(U_{7})=(1,1,0,1)$, i.e., $U_{7}\in \mathcal{V}_{7}$.
\end{proof}

\begin{proof}[Proof of Theorem \protect\ref{T5a}]
From Lemma \ref{P2a} it follows that, for any infinite binary information
system, its extended indicator vector coincides with one of the rows of
Table \ref{tab3}. Using Lemmas \ref{L1b}-\ref{L7b}, we conclude that each
row of Table \ref{tab3} is the extended indicator vector of some infinite
binary information system.  
\end{proof}

\section{Conclusions \label{S7a}}

Based on the results of exact learning and test theory, for an arbitrary
infinite binary information system, we studied five functions, which
characterize the dependence in the worst case of the minimum depth of a
decision tree solving a problem on the number of attributes in the problem
description. These five functions correspond to (i) decision trees using
attributes, (ii) decision trees using arbitrary hypotheses, (iii) decision
trees using both attributes and arbitrary hypotheses, (iv) decision trees
using proper hypotheses, and (v) decision trees using both attributes and
proper hypotheses. The first three functions were considered in \cite{DAM}.
The last two functions were investigated in this paper: we described
possible types of behavior for each of these two functions. We also studied
joint behavior of the considered five functions and distinguished seven
complexity classes of infinite binary information systems. In
the future, we are planing to translate the obtained results into the
language of exact learning.

\subsection*{Acknowledgments}
Research reported in this publication was supported by King Abdullah
University of Science and Technology (KAUST).

\bibliographystyle{spmpsci}
\bibliography{hyp}

\begin{thebibliography}{10}
\providecommand{\url}[1]{{#1}}
\providecommand{\urlprefix}{URL }
\expandafter\ifx\csname urlstyle\endcsname\relax
  \providecommand{\doi}[1]{DOI~\discretionary{}{}{}#1}\else
  \providecommand{\doi}{DOI~\discretionary{}{}{}\begingroup
  \urlstyle{rm}\Url}\fi

\bibitem{Angluin88}
Angluin, D.: Queries and concept learning.
\newblock Mach. Learn. \textbf{2}(4), 319--342 (1988)

\bibitem{Angluin04}
Angluin, D.: Queries revisited.
\newblock Theor. Comput. Sci. \textbf{313}(2), 175--194 (2004)

\bibitem{Chegis58}
Chegis, I.A., Yablonskii, S.V.: Logical methods of control of work of electric
  schemes.
\newblock Trudy Mat. Inst. Steklov (in Russian) \textbf{51}, 270--360 (1958)

\bibitem{Hegedus95}
Heged{\"{u}}s, T.: Generalized teaching dimensions and the query complexity of
  learning.
\newblock In: W.~Maass (ed.) Eigth Annual Conference on Computational Learning
  Theory, {COLT} 1995, Santa Cruz, California, USA, July 5--8, 1995, pp.
  108--117. {ACM} (1995)

\bibitem{Hellerstein96}
Hellerstein, L., Pillaipakkamnatt, K., Raghavan, V., Wilkins, D.: How many
  queries are needed to learn?
\newblock J. {ACM} \textbf{43}(5), 840--862 (1996)

\bibitem{Littlestone88}
Littlestone, N.: Learning quickly when irrelevant attributes abound: {A} new
  linear-threshold algorithm.
\newblock Mach. Learn. \textbf{2}(4), 285--318 (1988)

\bibitem{Maass92}
Maass, W., Tur{\'{a}}n, G.: Lower bound methods and separation results for
  on-line learning models.
\newblock Mach. Learn. \textbf{9}, 107--145 (1992)

\bibitem{Moshkov83}
Moshkov, M.: Conditional tests.
\newblock In: S.V. Yablonskii (ed.) Problemy Kibernetiki (in Russian), vol.~40,
  pp. 131--170. Nauka Publishers, Moscow (1983)

\bibitem{Moshkov89}
Moshkov, M.: On depth of conditional tests for tables from closed classes.
\newblock In: A.A. Markov (ed.) Combinatorial-Algebraic and Probabilistic
  Methods of Discrete Analysis (in Russian), pp. 78--86. Gorky University
  Press, Gorky (1989)

\bibitem{Moshkov01}
Moshkov, M.: Test theory and problems of machine learning.
\newblock In: International School-Seminar on Discrete Mathematics and
  Mathematical Cybernetics, Ratmino, Russia, May 31--June 3, 2001, pp. 6--10.
  {MAX} {P}ress, Moscow (2001)

\bibitem{Moshkov05}
Moshkov, M.: Time complexity of decision trees.
\newblock In: J.F. Peters, A.~Skowron (eds.) Trans. Rough Sets III,
  \emph{Lecture Notes in Computer Science}, vol. 3400, pp. 244--459. Springer
  (2005)

\bibitem{DAM}
Moshkov, M.: On the depth of decision trees with hypotheses.
\newblock Entropy \textbf{24}(1), 116 (2022).
\newblock \urlprefix\url{https://www.mdpi.com/1099-4300/24/1/116}

\bibitem{Naiman96}
Naiman, D.Q., Wynn, H.P.: Independence number and the complexity of families of
  sets.
\newblock Discrete Math. \textbf{154}, 203--216 (1996)

\bibitem{Pawlak81}
Pawlak, Z.: Information systems theoretical foundations.
\newblock Inf. Syst. \textbf{6}(3), 205--218 (1981)

\bibitem{Sauer72}
Sauer, N.: On the density of families of sets.
\newblock J. of Combinatorial Theory (A) \textbf{13}, 145--147 (1972)

\bibitem{Shelah72}
Shelah, S.: A combinatorial problem; stability and order for models and
  theories in infinitary languages.
\newblock Pacific J. of Mathematics \textbf{41}, 241--261 (1972)

\end{thebibliography}

\end{document}